\newtheorem{thm}{Theorem}[section]
\newtheorem{prop}[thm]{Proposition}
\newtheorem*{prob*}{Problem}
\newtheorem*{thm*}{Theorem}
\theoremstyle{definition}
\newtheorem{defn}[thm]{Definition}
\newtheorem*{defn*}{Definition}
\newtheorem*{rem*}{Remark}
\numberwithin{equation}{section}
\newcommand{\C}{\mathbb C}
\newcommand{\R}{\mathbb R}
\DeclareMathOperator{\Bessel}{Bessel}
\newcommand{\Tr}{\mathop{\mathrm{Tr}}}
\begin{document}
\title[Differential  equations for singular values]
 {\bf{Differential equations for singular values of products of Ginibre random matrices}}
\author{Eugene Strahov}
\address{Department of Mathematics, The Hebrew University of
Jerusalem, Givat Ram, Jerusalem
91904}\email{strahov@math.huji.ac.il}

\keywords{Products of random matrices, determinantal point processes, hard edge scaling limit, Meijer's $G$-functions, integrable differential equations,
Hamiltonian systems, monodromy preserving deformation equations}

\commby{}
\begin{abstract}
It was proved by Akemann, Ipsen and Kieburg \cite{AkemannIpsenKieburg}  that squared singular values of products of $M$ complex Ginibre random matrices form a determinantal point process whose correlation
kernel  is expressible in terms of Meijer's $G$-functions. Kuijlaars and Zhang \cite{Kuijlaars} recently  showed  that at the edge of the spectrum, this correlation kernel has a remarkable scaling limit $K_M(x,y)$ which can be understood as a generalization of the classical Bessel kernel of Random Matrix Theory.
In this paper we investigate the Fredholm determinant of the operator with the kernel
$K_M(x,y)\chi_J(y)$, where $J$ is a disjoint union of intervals, $J=\cup_j(a_{2j-1},a_{2j})$, and $\chi_J$ is the characteristic function of the set $J$.
This Fredholm determinant is equal to the probability that $J$ contains no particles of the limiting determinantal point process defined by $K_M(x,y)$ (the gap probability).
We derive  Hamiltonian differential equations  associated with the corresponding Fredholm determinant, and relate them with the monodromy preserving deformation equations
of the  Jimbo, Miwa, M$\hat{\mbox{o}}$ri, Ueno and Sato theory. In the special case $J=(0,s)$ we give a formula for the gap probability in terms of a solution of a system of non-linear ordinary differential equations.
\end{abstract}
\maketitle
\section{Introduction}
It is a well-known fact that a possible  language  for a description
of probabilistic quantities of interest in the theory of exactly
solvable random matrix models is the language of non-linear and
partial differential equations. We refer the reader to the works of
Tracy and Widom \cite{TracyWidomIntroduction}-\cite{TracyWidom},
Adler, Shiota and van Moerbeke \cite{AdlerShiotaMoerbeke,
AdlerShiotaMoerbeke1}, to the surveys by van Moerbeke
\cite{Moerbeke1, Moerbeke2}, and to the book by Forrester
\cite{ForresterLogGases} for an introduction to this aspect of
Random Matrix Theory, and for main results in this area of research.

One classical  example of such a description is that of singular values of a random complex  Ginibre matrix. Alternatively, one can think about eigenvalues of a complex Wishart
matrix, i.e. about eigenvalues of a matrix $X^*X$, where $X$ is a random complex Ginibre matrix of size $N\times K$.
By definition, a random complex Ginibre matrix is a rectangular matrix whose entries are independent standard complex Gaussian variables. It turns out that the squared singular values of a complex Ginibre matrix form a determinantal process on $(0,\infty)$ (called the classical Laguerre ensemble). A remarkable feature of this determinantal point process is that its correlation kernel can be written explicitly in terms of the classical Laguerre polynomials. This allows to study different asymptotic regimes of the classical Laguerre ensemble, see, for example, Forrester \cite{ForresterLogGases}, Chapter 7.  When one rescales the classical Laguerre  ensemble at the hard edge of the spectrum the limiting determinantal point process with the kernel
 \begin{equation}\label{BesselKernel}
K_{\Bessel}(x,y)=\frac{J_{\nu}(\sqrt{x})\sqrt{y}J_{\nu}'(\sqrt{y})-\sqrt{x}J_{\nu}'(\sqrt{x})J_{\nu}(\sqrt{y})}{2(x-y)}
\end{equation}
arises. (Here $J_{\nu}(z)$ stands for the Bessel function of the first kind of order $\nu$, and $\nu=N-K$, see \cite{Olver}, formula 10.9.23). Moreover, different probabilistic quantities of interest are expressible in terms of Fredholm determinants of operators defined by this kernel. For example, the Fredholm determinant of the operator with the kernel $K_{\Bessel}(x,y)\chi_J(y)$, where $J$ is a disjoint union of intervals, $J=\cup_j(a_{2j-1},a_{2j})$, and $\chi_J$ is the characteristic function of the set $J$ can be understood as the gap probability. This is the probability of the event
that there are no particles of the limiting determinantal process in $J$.
The theory of the Fredholm determinant defined by the Bessel kernel  is  developed by Tracy and Widom \cite{Tracy}. Namely, Tracy and Widom \cite{Tracy} obtain a system of partial differential equations for the logarithmic derivative of this Fredholm determinant. These partial differential equations admit a Hamiltonian formulation in which the end points $a_j$ of the intervals play the role of multi-time variables. Subsequently, it was shown in Palmer \cite{Palmer}, Harnad, Tracy, and Widom \cite{HarnadTracyWidom}, Harnad \cite{Harnad}  that the partial differential equations characterizing the Fredholm determinant of the operator with the kernel $K_{\Bessel}(x,y)\chi_J(y)$ can be viewed as a special case  of the monodromy preserving deformation equations of the Jimbo, Miwa, M$\hat{\mbox{o}}$ri, Ueno and Sato theory \cite{Jimbo}-\cite{Jimbo2}. In this framework, the Fredholm determinant of the operator defined by the Bessel kernel can be understood  as the tau-function, and the analysis
of the Fredholm determinant can be considered as a special case of the analysis of tau-functions.

In the case of a single  interval Tracy and Widom \cite{Tracy} give a representation of the Fredholm determinant in terms of a solution of the Painlev\'{e} V equation.
This allows to understand the asymptotic behavior of the probability of the event that there is no particle of the limiting determinantal point process in $(0,s)$, as $s\rightarrow\infty$.

It is the aim of the present paper to extend some of the results mentioned above to \textit{products of independent Ginibre matrices}. Such products arise in very different
areas of research, see, for example, M$\ddot{\mbox{u}}$ller \cite{Muller1,Muller2}, Akemann, Ipsen, and Kieburg \cite{AkemannIpsenKieburg}  for applications in the theory of telecommunications. In the context of this paper  the most important fact is that products of independent Ginibre matrices lead to determinantal point processes both in the complex plane $\C$ and on
the real line $\R$.
This has been shown recently by Akemann and Burda \cite{Akemann1},
Akemann,  Kieburg and Wei \cite{AkemannKieburgWei},
 Akemann, Ipsen and Kieburg \cite{AkemannIpsenKieburg}, Ipsen and Kieburg \cite{IpsenKieburg} (see also
Adhikari,  Reddy,  Reddy, and Saha \cite{AdhikariReddyReddySaha}).
 The correlation kernels of such determinantal processes can be expressed in terms of Meijer's $G$-functions, which enables to  investigate the statistics of eigenvalues and of singular values for products of independent Ginibre matrices by the usual methods of Random Matrix Theory. We refer the reader to papers by  Akemann and Strahov \cite{AkemannStrahov}, Zhang \cite{Zhang}, Kuijlaars and Zhang \cite{Kuijlaars}  (in addition to papers just mentioned above) for some recent results in this direction.
In particular, it was proved by Akemann, Ipsen and Kieburg
\cite{AkemannIpsenKieburg}  that squared singular values of products
of $M$ complex Ginibre random matrices form a determinantal point
process on $\R_{>0}$ whose correlation kernel  is expressible in
terms of Meijer's $G$-functions. Kuijlaars and Zhang
\cite{Kuijlaars}   show  that at the edge of the spectrum, this
correlation kernel has a remarkable scaling limit $K_M(x,y)$ which
can be understood as a generalization of the classical Bessel kernel
of Random Matrix Theory. In this paper we investigate the Fredholm
determinant of the operator with the kernel $K_M(x,y)\chi_J(y)$,
where $J$ is a disjoint union of intervals,
$J=\cup_j(a_{2j-1},a_{2j})$, and $\chi_J$ is the characteristic
function of the set $J$. We derive  Hamiltonian differential
equations  associated with the corresponding Fredholm determinant,
and relate them with the monodromy preserving deformation equations
of the Jimbo, Miwa, M$\hat{\mbox{o}}$ri, Ueno and Sato
\cite{Jimbo}-\cite{Jimbo2} theory. In the special case $J=(0,s)$ we
give a formula for the gap probability in terms of a solution of a
system of non-linear ordinary differential equations.

This paper is organized as follows. In Section
\ref{SubsectionSingularvalues} we summarize  exact and asymptotic
results on singular values   of products of Ginibre random matrices
which are relevant for this work. In particular, in Section
\ref{SubsectionSingularvalues}  we present an explicit formula for
the limiting correlation kernel $K_M(x,y)$ found in Kuijlaars and
Zhang \cite{Kuijlaars}. In Section \ref{SectionMainResults} we state
new results obtained in this paper. Proposition  \ref{Proposition
Partial Differential Equations for x y xi eta} gives a system of
partial differential equations associated with  the Fredholm
determinant of the operator with the kernel $K_{M}(x,y)\chi_J(y)$.
Proposition \ref{PropositionHamiltonianEquations} provides a
Hamiltonian formulation of these partial differential equations, and
in Section \ref{SubsectionIsomonodromicSystem} we relate them with
isomonodromic deformation equations, see Proposition
\ref{PropositionSchlesinger} and Proposition
\ref{PropositionIsomonodromyFormulation}. The special case $J=(0,s)$
is considered in in Section \ref{SubsectionSpecialCase}, and
Proposition \ref{PropositionNonlinearOrdinaryDiffEquation} gives a
formula for the probability that no particles of the determinantal
process lie in the interval $(0,s)$. Section \ref{SectionProofs} is
devoted to the derivations of differential equations and to proofs
of other results stated in Section \ref{SectionMainResults}.
Finally, Section \ref{SectionConclusions} summarizes the results of
this work in a non-technical language.

 \section{Singular values of products of random complex matrices}\label{SubsectionSingularvalues}
 To present the results on singular values of products of random complex matrices
 let us adopt the same notation and definitions
for Meijer's $G$-function as in Luke \cite{Luke}, Section 5.2.
Namely, the Meijer G-function
$G_{p,q}^{m,n}\left(x\biggl|\begin{array}{cccc}
                                                                a_1, & a_2, & \ldots, & a_p \\
                                                                b_1, & b_2, & \ldots, & b_q
                                                              \end{array}
\right)$
is defined as
\begin{equation}
\begin{split}
G_{p,q}^{m,n}&\left(z\biggl|\begin{array}{cccc}
                                                                a_1, & a_2, & \ldots, & a_p \\
                                                                b_1, & b_2, & \ldots, & b_q
                                                              \end{array}
\right)=\frac{1}{2\pi i}\int\limits_{C}\frac{\prod_{j=1}^m\Gamma(b_j-s)\prod_{j=1}^n\Gamma(1-a_j+s)}{\prod_{j=m+1}^q\Gamma(1-b_j+s)\prod_{j=n+1}^p\Gamma(a_j-s)}z^sds.
\nonumber
\end{split}
\end{equation}
Here an empty product is interpreted as unity, $0\leq m\leq q$, $0\leq n\leq p$, and the parameters
$\{a_k\}$ ($k=1,\ldots,p$) and $\{b_j\}$ ($j=1,\ldots,m$) are such that no pole of $\Gamma(b_j-s)$ coincides with any pole of
$\Gamma(1-a_k+s)$. We assume that $z\in\C\setminus\{0\}$. The contour of integration $C$ goes from $-i\infty$ to $+i\infty$ so that all poles
$\Gamma(b_j-s)$, $j=1,\ldots,m,$ lie to the right of the path, and all poles of $\Gamma(1-a_k+s)$, $k=1,\ldots,n,$ lie to the left of the path.
If $p=0$, then $n=0$, and  we write the corresponding Meijer $G$-function as
$
G_{0,q}^{m,0}\left(x\biggl|\begin{array}{cccc}
 b_1, & b_2, & \ldots, & b_q
\end{array}
\right).
$

 Let $X(1)$, $\ldots$, $X(M)$ be independent matrices whose entries are i.i.d standard complex Gaussian variables.
 Assume that $X(j)$ has size $N_j\times N_{j-1}$, where $1\leq j\leq M$. Consider the product matrix
 $$
 Y_M=X(M)X(M-1)\ldots X(1),
 $$
 which is a rectangular matrix of size $N_{M}\times N_0$. Note that $Y_M^*Y_M$
 is a square matrix of size $N_0$. Set
 $$
 \nu_l=N_l-N_0,\;\;l=0,1,\ldots,M,
 $$
 and assume that $\nu_l\geq 0$ for $l=1,\ldots, M$.\footnote{Note that $\nu_0=0$. This fact will be used throughout the paper.}
 Akemann, Ipsen and Kieburg \cite{AkemannIpsenKieburg} proved the following
 \begin{thm}\label{TheoremSingularvalues}
 The squared singular values of $Y_M$ form a determinantal point process on $\R_{>0}$.
 This determinantal point process is a biorthogonal ensemble\footnote{The notion of biorthogonal ensembles was introduced in Borodin \cite{Borodin}} with joint density function given by
 $$
 P^{(M)}(x_1,\ldots,x_{N_0})=\frac{1}{Z_{N_0}}\prod\limits_{1\leq j<k\leq N_0}(x_k-x_j)\det\left(w_{k-1}^{(M)}(x_j)\right)_{j,k=1}^{N_0},
 $$
 where $x_k>0$, $k=1,\ldots,N_0$, are the squared singular values of $Y_M$, $Z_{N_0}$ is a normalization constant,  and the functions $w_k^{(M)}(x)$ can be expressed in terms of Meijer's G-functions,
 $$
 w_k^{(M)}(x)=G_{0,M}^{M,0}\left(x\biggl|\begin{array}{ccccc}
 \nu_M, & \nu_{M-1}, & \ldots, &\nu_2, & \nu_1+k
\end{array}
\right).
 $$
 \end{thm}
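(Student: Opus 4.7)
The plan is to prove the statement by induction on $M$. For the base case $M=1$, $Y_1=X(1)$ is a single Ginibre matrix of size $N_1\times N_0$ and its squared singular values form the classical Laguerre unitary ensemble, with joint density proportional to $\prod_{j<k}(x_k-x_j)^2\prod_j x_j^{\nu_1}e^{-x_j}$. Splitting the squared Vandermonde as $\det(x_j^{k-1})\det(x_j^{k-1})$ and absorbing the weight into the second determinant yields the stated form with $w_{k-1}^{(1)}(x)=x^{\nu_1+k-1}e^{-x}$. Closing the Mellin--Barnes contour in the definition of $G_{0,1}^{1,0}(x\,|\,\nu_1+k-1)$ to the right and summing the residues of $\Gamma(\nu_1+k-1-s)$ gives the series $\sum_{n\geq 0}(-1)^n x^{\nu_1+k-1+n}/n!=x^{\nu_1+k-1}e^{-x}$, identifying the two expressions.

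For the inductive step, write $Y_M=X(M)Y_{M-1}$ and compute the law of the eigenvalues of $Y_M^*Y_M$ conditional on $Y_{M-1}$. Using the SVD of $Y_{M-1}$ one observes that only the first $N_0$ columns of $X(M)$ effectively enter the eigenvalue problem, since $Y_{M-1}$ has rank $N_0$; thus the squared singular values of $Y_M$ coincide with the eigenvalues of $\Sigma_0 W_{11}\Sigma_0$, where $\Sigma_0=\diag(\sqrt{x_1^{(M-1)}},\dots,\sqrt{x_{N_0}^{(M-1)}})$ contains the square roots of the squared singular values of $Y_{M-1}$ and $W_{11}$ is an $N_0\times N_0$ Wishart matrix built from an $N_M\times N_0$ Ginibre sub-block of $X(M)$. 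Integrating out the angular variables of $W_{11}$ via the Harish-Chandra--Itzykson--Zuber formula decouples the joint distribution and, after cancellation against the Vandermondes produced by the inductive hypothesis, yields a biorthogonal joint density of the stated form in which each $w_{k-1}^{(M-1)}$ is replaced by its Mellin convolution with the Laguerre weight $u^{\nu_M}e^{-u}$:
\[
w_{k-1}^{(M)}(y)=\int_0^\infty u^{\nu_M}e^{-u}\,w_{k-1}^{(M-1)}(y/u)\,\frac{du}{u}.
\]

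Iterating this one-step recursion from the Laguerre base case, the Mellin transform of $w_{k-1}^{(M)}$ becomes $\Gamma(s+\nu_1+k-1)\prod_{j=2}^{M}\Gamma(s+\nu_j)$; inverting via the Mellin--Barnes integral yields exactly $G_{0,M}^{M,0}(x\,|\,\nu_M,\nu_{M-1},\dots,\nu_2,\nu_1+k-1)$, which is the claimed $w_{k-1}^{(M)}$. The main technical obstacle is the inductive step: one must carefully track the rectangular dimensions and Jacobians through the SVD of $Y_{M-1}$, justify that only the first $N_0$ columns of $X(M)$ effectively contribute, and verify that the multi-variable coupling produced by the HCIZ integrand exactly cancels against the Vandermondes in the joint density of $Y_{M-1}$ to leave a genuine one-dimensional Mellin convolution. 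Once that combinatorial cancellation is established, the identification with Meijer $G$-functions via Mellin transforms is essentially mechanical.
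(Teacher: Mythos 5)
This theorem is not proved in the paper at all: it is quoted from Akemann, Ipsen and Kieburg \cite{AkemannIpsenKieburg}, so there is no internal proof to compare against. Your proposal reconstructs essentially the argument of that cited source, and its outline is correct. The base case is the standard Laguerre ensemble, and $G_{0,1}^{1,0}(x\,|\,b)=x^{b}e^{-x}$ follows from closing the Mellin--Barnes contour to the right, as you say. For the inductive step, the reduction to $\Sigma_0 W_{11}\Sigma_0$ via the singular value decomposition of $Y_{M-1}$ and unitary invariance of $X(M)$ is sound (conditionally on $Y_{M-1}$, the first $N_0$ columns of $X(M)U$ form an $N_M\times N_0$ Ginibre block, and $\nu_M\geq 0$ guarantees $W_{11}$ is nonsingular Wishart). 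Carrying out the step you flagged: the conditional eigenvalue density of $\Sigma_0 W_{11}\Sigma_0$ given $\diag(x)=\Sigma_0^2$ is proportional to $\Delta(y)\,\Delta(x)^{-1}\prod_j y_j^{\nu_M}\prod_i x_i^{-\nu_M-N_0+1}\det\bigl(e^{-y_j/x_i}\bigr)$ (HCIZ plus the Jacobian $\det(\Sigma_0^2)^{-N_M}$), so after multiplying by the inductive density, cancelling $\Delta(x)$, and applying the Andr\'eief-type identity one gets exactly the weight $w_{k-1}^{(M)}(y)=y^{\nu_M}\int_0^\infty e^{-y/x}\,w_{k-1}^{(M-1)}(x)\,x^{-\nu_M-1}dx$, which after the substitution $u=y/x$ is your Mellin convolution; the Mellin-transform bookkeeping $\Gamma(s+\nu_1+k-1)\prod_{j=2}^{M}\Gamma(s+\nu_j)$ then identifies $w_{k-1}^{(M)}$ with the stated $G_{0,M}^{M,0}$. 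So the proposal is a correct proof strategy with no conceptual gap; what remains is exactly the routine verification you named (Jacobians, a.s. distinctness of the $x_i$ for HCIZ, and interchange of integrals), and it coincides in substance with the published proof rather than offering a genuinely different route.
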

  As $M=1$ we obtain the determinantal point process on $\R_{>0}$ called the classical Laguerre ensemble.

Kuijlaars and Zhang \cite{Kuijlaars} found the scaling limit at the origin of the relevant correlation kernel which generalizes the classical Bessel kernel.  According to Kuijlaars and Zhang \cite{Kuijlaars}, the limiting  kernel  can be written explicitly in terms of  Meijer's G-functions.
Namely, the following Theorem holds true
\begin{thm}
Let $K_{N_0}^{(M)}(x,y)$ be the correlation kernel of the determinantal point process defined by the joint density $P^{(M)}(x_1,\ldots,x_{N_0})$ (see
Theorem \ref{TheoremSingularvalues}). Then we have
$$
\underset{N_0\rightarrow\infty}{\lim}\frac{1}{N_0}K_{N_0}^{(M)}\left(\frac{x}{N_0},\frac{y}{N_0}\right)=K_M(x,y),
$$
where $K_M(x,y)$ is given by
\begin{equation}\label{KuijlaarsZhangKernel}
\begin{split}
&K_M(x,y)\\
&=\frac{\mathcal{B}\left(G_{0,M+1}^{1,0}\left(x\biggl|\begin{array}{cccc}
-\nu_0, & -\nu_1, & \ldots, & -\nu_M
\end{array}
\right),G_{0,M+1}^{M,0}\left(y\biggl|\begin{array}{cccc}
\nu_1,  & \ldots, & \nu_M, & \nu_0
\end{array}
\right)\right)}{x-y}.
\end{split}
\end{equation}
In the formula above  $\mathcal{B}(.,.)$ is a bilinear operator defined by
$$
\mathcal{B}\left(f(x),g(y)\right)=(-1)^{M+1}\sum\limits_{j=0}^M(-1)^j\left(x\frac{d}{dx}\right)^jf(x)\left(\sum\limits_{i=0}^{M-j}\alpha_{i+j}\left(y\frac{d}{dy}\right)^ig(y)\right).
$$
The constants $\alpha_i$  are determined by
\begin{equation}\label{Coefficientsa_i}
\prod\limits_{i=1}^M\left(x-\nu_i\right)=\sum\limits_{i=0}^M\alpha_ix^i,
\end{equation}
i.e. the constants  $\alpha_i$  are expressible in terms of the elementary symmetric polynomials,
$$
\alpha_i=(-1)^ie_{M-i}(\nu_1,\ldots,\nu_M).
$$
\end{thm}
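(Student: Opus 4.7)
The plan is to represent the pre-limit kernel $K_{N_0}^{(M)}(x,y)$ as a double Mellin--Barnes contour integral, perform the hard-edge rescaling inside the integral, and identify the limit with the bilinear form in \eqref{KuijlaarsZhangKernel}. Using the biorthogonal ensemble structure of Theorem \ref{TheoremSingularvalues}, one writes $K_{N_0}^{(M)}(x,y)=\sum_{k=0}^{N_0-1}P_k(x)Q_k(y)$, where the $Q_k$ are linear combinations of the weights $w_k^{(M)}$ and the $P_k$ are polynomials biorthogonal to them. Since the $w_k^{(M)}$ already have Mellin--Barnes representations as Meijer $G$-functions and the dual polynomial family admits a similar representation, the finite sum in $k$ can be collapsed (via a geometric-series trick in the Mellin variable, as in the Laguerre case $M=1$) into a double contour integral of the schematic form
\begin{equation*}
K_{N_0}^{(M)}(x,y)=\frac{1}{(2\pi i)^{2}}\int_{\mathcal{C}_{s}}\int_{\mathcal{C}_{t}}\frac{\Gamma(s-N_0+1)}{\Gamma(t-N_0+1)}\,\frac{\prod_{\ell=0}^{M}\Gamma(\nu_\ell-t)}{\prod_{\ell=0}^{M}\Gamma(\nu_\ell-s)}\,\frac{x^{s}\,y^{-t-1}}{s-t}\,ds\,dt,
\end{equation*}
with contours chosen so as to separate the two relevant sets of poles.

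Next I would apply the hard-edge scaling $x\mapsto x/N_0$, $y\mapsto y/N_0$ together with the prefactor $1/N_0$. The only $N_0$-dependence in the integrand is the ratio $\Gamma(s-N_0+1)/\Gamma(t-N_0+1)$, which by Stirling's formula behaves like $N_0^{s-t}$ uniformly on vertical contours, exactly cancelling the power of $N_0$ generated by the rescaling. Passage to the limit under the integral is justified by the usual polynomial decay of Gamma ratios along imaginary directions. The resulting limiting double integral has factors in $s$ and $t$ that are precisely the Mellin--Barnes representations of the two Meijer $G$-functions appearing in the statement of the theorem.

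The remaining step is to recognize this limiting double integral as $(x-y)^{-1}\mathcal{B}(f,g)$. The mechanism is the polynomial identity
\begin{equation*}
\prod_{i=1}^{M}(s-\nu_i)-\prod_{i=1}^{M}(t-\nu_i)=(s-t)\,R(s,t),
\end{equation*}
where $R(s,t)$ decomposes as a sum of monomials $s^{a}t^{b}$ with $a+b\leq M-1$ whose coefficients involve the elementary symmetric functions of $\nu_1,\ldots,\nu_M$. Multiplying and dividing the original integrand by $\prod_i(s-\nu_i)-\prod_i(t-\nu_i)$ replaces $(s-t)^{-1}$ by $R(s,t)$ divided by this difference, which in turn lets one decouple the $s$- and $t$-integrations. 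Each surviving single integral is a Meijer $G$-function, while each monomial $s^{a}t^{b}$ translates, under the correspondence $s^a\,x^s\leftrightarrow(x\frac{d}{dx})^{a}$ and $t^b\,y^t\leftrightarrow(y\frac{d}{dy})^{b}$ applied to the Mellin--Barnes representation, into the differential operators appearing in $\mathcal{B}$; the coefficients then reproduce the $\alpha_i$ of \eqref{Coefficientsa_i}.

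I expect the main obstacle to be this last algebraic step: the bookkeeping needed to match the bilinear operator $\mathcal{B}$ with the decomposition of $\prod(s-\nu_i)-\prod(t-\nu_i)$ is delicate, and one must verify the absence of boundary contributions when interpreting $s^{a}$ as integration by parts in the Mellin variable. By contrast, the analytic steps -- contour choices, uniform estimates, and the exchange of limit and integration -- are comparatively standard for Mellin--Barnes kernels and follow established patterns from the scaling analysis of the Laguerre ensemble.
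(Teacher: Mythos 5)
First, note that the paper does not prove this theorem at all: it is quoted from Kuijlaars and Zhang \cite{Kuijlaars}, so there is no internal proof to compare against. Your outline does follow the general strategy of that reference (a double Mellin--Barnes representation of the finite-$N_0$ kernel coming from the biorthogonal/multiple orthogonal polynomial structure, hard-edge rescaling inside the integral using $\Gamma(s-N_0+1)/\Gamma(t-N_0+1)\sim N_0^{s-t}$, then identification of the limiting double integral with the stated kernel). However, two of the three steps are only asserted -- the finite-$N_0$ double integral is not produced by a ``geometric-series trick'' alone but in \cite{Kuijlaars} rests on explicit Mellin--Barnes formulas for the type I and type II multiple orthogonal polynomials plus a summation identity, and without its precise form the contour choices and the dominated-convergence argument you invoke cannot even be set up -- and the one step you do detail contains a concrete error.

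The decoupling identity is wrong as stated. Under the correspondence $\left(x\frac{d}{dx}\right)^j f\leftrightarrow s^j$, $\left(y\frac{d}{dy}\right)^i g\leftrightarrow t^i$, the operator $\mathcal{B}$ has symbol
\begin{equation*}
(-1)^{M+1}\sum_{j=0}^{M}\sum_{i=0}^{M-j}(-1)^j\alpha_{i+j}\,s^j t^i=(-1)^{M+1}\,\frac{q(t)-q(-s)}{t+s},\qquad q(u)=\prod_{j=0}^{M}(u-\nu_j),
\end{equation*}
a polynomial of total degree $M$ built from the degree-$(M+1)$ product over \emph{all} indices $j=0,\dots,M$ (the extra root $\nu_0=0$ arises from the factors $\Gamma(-s)$ and $\Gamma(1+t)$ when the multiplications by $x$ and $y$ are absorbed as shifts $s\mapsto s-1$, $t\mapsto t+1$, and the reflection $s\mapsto -s$ encodes both the signs $(-1)^j$ and the fact that $f$ carries parameters $-\nu_j$). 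Your identity $\prod_{i=1}^{M}(s-\nu_i)-\prod_{i=1}^{M}(t-\nu_i)=(s-t)R(s,t)$ yields only monomials of total degree at most $M-1$, so it can never produce the top term $\left(x\frac{d}{dx}\right)^{M}f(x)\,\alpha_M\,g(y)$ of $\mathcal{B}$; moreover ``multiplying and dividing'' by such a difference is not the correct mechanism -- the natural operation is to multiply the kernel by $(x-y)$ and use the Gamma shift relations, equivalently the differential equations (\ref{DiffEquationForf(x)}) and (\ref{DiffEquationForg(y)}). The cleanest repair, and in essence the route of \cite{Kuijlaars}, is to first establish the representation (\ref{EquationKernelAsIntegral}), $K_M(x,y)=\int_0^1 f(xt)g(yt)\,dt$, from the limiting double integral, and then obtain the bilinear form by integrating by parts in $t$ using those ODEs; as written, your final step does not go through.
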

Using the fact that $G_{0,2}^{1,0}\left(x\biggl|\begin{array}{cc}
\nu_1,  &  \nu_0
\end{array}
\right) $ can be written in terms of the Bessel functions it is not
hard to check (see  Kuijlaars and Zhang \cite{Kuijlaars}, Section
5.3) that if $M=1$ and $\nu_1=\nu$, one obtains a kernel equivalent
to the classical Bessel kernel (equation (\ref{BesselKernel})).
 It was observed in Kuijlaars and Zhang \cite{Kuijlaars} that for $M=2$ kernel (\ref{KuijlaarsZhangKernel}) coincides with the scaling limit found by
Bertola, Gekhtman and Szmigielski \cite{Bertola} in the Cauchy-Laguerre two-matrix model. Moreover, Forrester \cite{Forrester} has proved that
the kernel $K_M(x,y)$  also arises (at the same scaling limit) in the problem on eigenvalue statistics for complex $N\times N$ Wishart matrices $W^{*}_{r,s}W_{r,s}$,
where $W_{r,s}$ is the product of $r$ complex Gaussian matrices, and the inverse of $s$ complex Gaussian matrices.

Kuijlaars and Zhang \cite{Kuijlaars} noted that the limiting kernel
(\ref{KuijlaarsZhangKernel}) can be represented  in the form
\begin{equation}\label{IntegrableForm}
K(x,y)=\frac{\sum\limits_{i=1}^kF_i(x)G_i(y)}{x-y},\;\;\;\mbox{where}\;\;\sum\limits_{i=1}^kF_i(x)G_i(x)=0.
\end{equation}
Such kernels (and operators defined by such kernels) are called integrable, see Its, Izergin, Korepin and Slavnov
\cite{Its}, Deift \cite{DeiftIntegrableOperators}.
Many questions in the theory of classical ensembles of random matrices can be reduced
to the evaluation of Fredholm determinants $\det(I-\lambda K)$, where $K$ is an operator which has a kernel
of form (\ref{IntegrableForm}).  For a general theory of such Fredholm determinants
see, for example, Its \cite{Its1}, Its and Harnad \cite{ItsHarnad}, Deift \cite{DeiftIntegrableOperators}, Tracy and Widom \cite{TracyWidom,TracyWidomSystems} and
the references therein.

To summarize, the results obtained in Akemann, Ipsen and Kieburg \cite{AkemannIpsenKieburg}, Kuijlaars and Zhang \cite{Kuijlaars} imply that
after rescaling at the hard edge (origin) the statistics of singular values is described by Fredholm determinants of integrable operators
(in the sense of Its, Isergin, Korepin and Slavnov \cite{Its}).

\section{Main results}\label{SectionMainResults}
\subsection{Notation}
Let $f(x)$ and $g(y)$ be two functions defined in terms of Meijer's
$G$-functions as follows
\begin{equation}\label{f}
f(x)=G_{0,M+1}^{1,0}\left(x\biggl|\begin{array}{cccc}
 -\nu_0, & -\nu_1, & \ldots, & -\nu_M
\end{array}
\right),
\end{equation}
and
\begin{equation}\label{g}
g(y)=G_{0,M+1}^{M,0}\left(y\biggl|\begin{array}{cccc}
 \nu_1, & \ldots, & \nu_M, & \nu_0
\end{array}
\right).
\end{equation}
The functions $f(x)$ and $g(y)$
can be expressed in terms of contour integrals as
\begin{equation}
f(x)=\frac{1}{2\pi
i}\int\limits_{c-i\infty}^{c+i\infty}\frac{\Gamma(-s)x^s}{\prod_{j=1}^M\Gamma(1+\nu_j+s)}ds,
\nonumber
\end{equation}
and
\begin{equation}
g(y)=\frac{1}{2\pi i}\int\limits_{c-i\infty}^{c+i\infty}\frac{\prod_{j=1}^M\Gamma(\nu_j-s)}{\Gamma(1+s)}y^sds,
\nonumber
\end{equation}
where $-1<c<0$. (We have used the fact that $\nu_0=0$).
\begin{prop}
The functions  $f(x)$ and $g(y)$ satisfy the following differential equations
\begin{equation}\label{DiffEquationForf(x)}
\prod\limits_{j=0}^M(x\frac{d}{dx}+\nu_j)f(x)=-xf(x),
\end{equation}
and
\begin{equation}\label{DiffEquationForg(y)}
\prod\limits_{j=0}^M(y\frac{d}{dy}-\nu_j)g(y)=(-1)^Myg(y).
\end{equation}
\end{prop}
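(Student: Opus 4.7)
The plan is to apply the differential operators directly to the Mellin--Barnes contour integrals defining $f(x)$ and $g(y)$, commute them with the integral (legitimate because $x^{s}$ is an eigenfunction of $x\tfrac{d}{dx}$ with eigenvalue $s$), simplify the resulting rational-in-$s$ factor using the functional equation $\Gamma(z+1)=z\Gamma(z)$, and then perform a unit translation $s\mapsto s+1$ of the contour variable to recover the integrand of $f$ (respectively $g$) multiplied by the desired right-hand side.

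For (\ref{DiffEquationForf(x)}), I would use $\left(x\frac{d}{dx}+\nu_j\right)x^{s}=(s+\nu_j)x^{s}$ together with $\nu_0=0$, so that the operator $\prod_{j=0}^{M}\left(x\frac{d}{dx}+\nu_j\right)$ pulls a factor $s\prod_{j=1}^{M}(s+\nu_j)$ under the integral. The identity $\Gamma(1+\nu_j+s)=(\nu_j+s)\Gamma(\nu_j+s)$ cancels those linear factors against the denominator, and the identity $s\,\Gamma(-s)=-\Gamma(1-s)$ absorbs the remaining $s$. The substitution $s\mapsto s+1$ then reproduces the defining integrand of $f$ against an explicit factor $-x$, which is the claim. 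The argument for (\ref{DiffEquationForg(y)}) is structurally identical: one uses $\left(y\frac{d}{dy}-\nu_j\right)y^{s}=(s-\nu_j)y^{s}$ together with $\nu_0=0$, the relation $(s-\nu_j)\Gamma(\nu_j-s)=-\Gamma(1+\nu_j-s)$ (which is where the global sign $(-1)^{M}$ comes from), and $s/\Gamma(1+s)=1/\Gamma(s)$; then the same shift $s\mapsto s+1$ yields $(-1)^{M}\,y\,g(y)$.

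The only step that is not mechanical bookkeeping is justifying that after the substitution $s\mapsto s+1$ the resulting contour $\re s=c-1$ may be deformed back to the original contour $\re s=c$. I would handle this in the standard way: Stirling's asymptotics for $\Gamma$ on vertical lines give enough decay that horizontal closing segments contribute nothing, and one observes that the poles of both integrands are located in the half-plane $\{\re s\geq 0\}$ (coming from $\Gamma(-s)$ in the $f$-case and from $\Gamma(\nu_j-s)$ with $\nu_j\geq 0$ in the $g$-case), while the reciprocal gamma factors are entire. Hence the strip $\{-2<\re s<0\}$, containing both $c$ and $c-1$, is pole-free and the contour shift is residue-free. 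I expect this contour justification, rather than any of the algebra, to be the main point requiring care.
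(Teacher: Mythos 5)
Your algebra is exactly the classical derivation of the Meijer $G$-function differential equation, and the paper itself bypasses it entirely: its proof is a one-line citation of the general ODE for $G^{m,n}_{p,q}$ (NIST formula (16.21.1)), of which your computation is in effect a proof in the special case at hand. The functional-equation bookkeeping ($\Gamma(1+\nu_j+s)=(\nu_j+s)\Gamma(\nu_j+s)$, $s\Gamma(-s)=-\Gamma(1-s)$, $(s-\nu_j)\Gamma(\nu_j-s)=-\Gamma(1+\nu_j-s)$, then $s\mapsto s+1$) is correct, and so is your observation that the strip $-2<\re s<0$ is pole-free.

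The gap is precisely in the step you single out as the main point of care, and the justification you give for it is false for $f$ when $M\geq 2$. By Stirling, $|\Gamma(\sigma+it)|\asymp |t|^{\sigma-1/2}e^{-\pi|t|/2}$, so the $f$-integrand satisfies
\begin{equation}
\left|\frac{\Gamma(-s)}{\prod_{j=1}^{M}\Gamma(1+\nu_j+s)}x^{s}\right|\;\asymp\; |t|^{\beta(\sigma)}\,e^{(M-1)\pi|t|/2},\qquad s=\sigma+it,
\end{equation}
i.e.\ it \emph{grows} exponentially along vertical lines as soon as $M\geq2$ (this is the failure of the vertical-contour criterion $p+q<2(m+n)$, since here $q=M+1>2$). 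Consequently the straight-line integral for $f$ is not absolutely convergent, differentiation under the integral sign needs reinterpretation, and the horizontal closing segments at height $\pm T$ blow up like $e^{(M-1)\pi T/2}$ rather than vanish; even for $M=1$ the convergence is only conditional and on the segment $\re s\in[c-1,c]$ the integrand can grow polynomially in $T$, so "horizontal segments contribute nothing" does not follow from decay alone. (The $g$-integrand, with $M$ gammas upstairs and one downstairs, does decay like $e^{-(M-1)\pi|t|/2}$; the problem is specific to $f$.) The fix is routine but changes the argument: either run your identical algebra on the contour that Luke/NIST actually prescribe for $q>p$, namely a loop beginning and ending at $+\infty$ encircling the poles of $\Gamma(-s)$, where the super-exponential decay of $1/\prod_j\Gamma(1+\nu_j+s)$ as $\re s\to+\infty$ makes every step absolutely convergent and the shift $s\mapsto s+1$ crosses no poles; or verify the equations coefficient-wise on the residue series $f(x)=\sum_{k\geq0}\frac{(-x)^{k}}{k!\,\prod_{j=1}^{M}\Gamma(1+\nu_j+k)}$ and its analogue for $g$, where the same recursion appears with no contour issues; or simply cite the $G$-function ODE as the paper does.
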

\begin{proof}
Use the fact that the Meijer G-function $G_{p,q}^{m,n}\left(z\biggl|\begin{array}{cccc}
                                                                a_1, & a_2, & \ldots, & a_p \\
                                                                b_1, & b_2, & \ldots, & b_q
                                                              \end{array}
\right)$ satisfies the following differential equation
\begin{equation}
\begin{split}
\biggl[(-1)^{p-m-n}z&\prod\limits_{j=1}^p\left(z\frac{d}{dz}-a_j+1\right)\\
&-\prod\limits_{j=1}^q\left(z\frac{d}{dz}-b_j\right)\biggr]G_{p,q}^{m,n}\left(z\biggl|\begin{array}{cccc}
                                                                a_1, & a_2, & \ldots, & a_p \\
                                                                b_1, & b_2, & \ldots, & b_q
                                                              \end{array}
\right)=0,
\end{split}
\end{equation}
see Ref. \cite{Olver}, formula (16.21.1).
\end{proof}
For $0\leq j\leq M$ define
\begin{equation}\label{PHIPSI}
\phi_j(x)=(-1)^{M-j+1}\left(x\frac{d}{dx}\right)^jf(x),\;\;\psi_j(y)=\sum\limits_{i=0}^{M-j}\alpha_{i+j}\left(y\frac{d}{dy}\right)^{i}g(y),
\end{equation}
where the coefficients $\alpha_0,\ldots,\alpha_M$ are defined by equation (\ref{Coefficientsa_i}).
Using these functions, the correlation kernel $K_M(x,y)$  (defined by equation (\ref{KuijlaarsZhangKernel})) can be written as
\begin{equation}\label{The Generalized Bessel Kernel}
K_M(x,y)=\frac{\sum\limits_{j=0}^M\phi_j(x)\psi_j(y)}{x-y},
\end{equation}
where the indeterminacy arising for $x=y$ is resolved via the L'Hospital rule.
As it is noted in Kuijlaars and Zhang \cite{Kuijlaars}, Section 5.2, the kernel $K_M(x,y)$ has the continuity property. This means that the condition
\begin{equation}\label{ContinuityOfK_M}
\sum\limits_{j=0}^M\phi_j(x)\psi_j(x)=0
\end{equation}
holds true. In what follows we will refer to the kernel $K_M(x,y)$ defined by equation (\ref{KuijlaarsZhangKernel}) or, equivalently, by equation (\ref{The Generalized Bessel Kernel}) as  the \textit{generalized Bessel kernel}.
\begin{prop} The correlation kernel $K_M(x,y)$ can be written as
\begin{equation}\label{EquationKernelAsIntegral}
K_M(x,y)=\int\limits_{0}^1f(xt)g(yt)dt.
\end{equation}
\end{prop}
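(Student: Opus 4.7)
My plan is to exploit the differential equations (\ref{DiffEquationForf(x)})--(\ref{DiffEquationForg(y)}) via a Lagrange bilinear identity. The key observation is that the Euler operator $D:=t\frac{d}{dt}$ is formally antisymmetric for the measure $dt/t$ on $(0,1)$: integration by parts gives $\int(D+\nu)u\cdot v\,\frac{dt}{t}=[uv]_0^1+\int u(\nu-D)v\,\frac{dt}{t}$ for any constant $\nu$.

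Set $u(t):=f(xt)$ and $v(t):=g(yt)$. Because $D$ applied to $f(xt)$ coincides with $D_xf$ evaluated at $xt$ (where $D_x:=x\frac{d}{dx}$), the equations (\ref{DiffEquationForf(x)})--(\ref{DiffEquationForg(y)}) become the eigenvalue relations $Lu=-xt\,u$ and $L^*v=-yt\,v$, where $L:=\prod_{j=0}^M(D+\nu_j)$ and $L^*:=\prod_{j=0}^M(\nu_j-D)=(-1)^{M+1}\prod_{j=0}^M(D-\nu_j)$ is the formal adjoint of $L$. An induction on $M$, peeling one factor at a time via the by-parts formula above, yields the pointwise Lagrange identity
\[
(Lu)v-u(L^*v)=D\,B[u,v], \quad B[u,v]:=\sum_{j=0}^M\Bigl(\prod_{i=j+1}^M(D+\nu_i)\,u\Bigr)\Bigl(\prod_{i=0}^{j-1}(\nu_i-D)\,v\Bigr),
\]
with empty products being the identity. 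Substituting the eigenvalue relations on the left gives $DB[u,v]=(y-x)t\,f(xt)g(yt)$, hence $\frac{d}{dt}B[u,v]=-(x-y)f(xt)g(yt)$, and integrating from $0$ to $1$ produces
\[
(x-y)\int_0^1 f(xt)g(yt)\,dt=B[u,v](0)-B[u,v](1).
\]

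It remains to evaluate the two boundary values. At $t=1$, each $D$ becomes $D_x$ on $f$ or $D_y$ on $g$, and a finite algebraic rearrangement--expanding each operator product as a polynomial in $D_x^kf$ and regrouping by powers via the symmetric-function relation (\ref{Coefficientsa_i}) defining the $\alpha_i$--identifies $-B[u,v](1)$ with $\sum_{j=0}^M\phi_j(x)\psi_j(y)$, i.e.\ with $(x-y)K_M(x,y)$ by (\ref{The Generalized Bessel Kernel}). At $t=0$, every summand of $B[u,v](0)$ vanishes: for $j\geq 1$ the operator $\prod_{i=0}^{j-1}(\nu_i-D_y)$ contains $(\nu_0-D_y)=-D_y$ (since $\nu_0=0$), which kills any constant in its argument; for $j=0$ the Mellin--Barnes expansion of $g$ shows $g(0)=0$ unless some $\nu_l$ with $l\geq 1$ vanishes, but in that degenerate case $\bigl(\prod_{i=1}^M(D_x+\nu_i)f\bigr)(0)=f(0)\prod_{i=1}^M\nu_i=0$ instead. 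The main obstacle I anticipate is the algebraic matching at $t=1$: showing that the Lagrange bilinear form $-B[u,v](1)$ coincides with $\sum_j\phi_j(x)\psi_j(y)$ is a purely combinatorial identity between two decompositions of the same function, but one requiring careful bookkeeping of signs and symmetric-polynomial coefficients.
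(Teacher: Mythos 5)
Your argument is correct in outline and is genuinely different from what the paper does: the paper offers no proof at all, simply citing Kuijlaars and Zhang, who obtain the integral representation from the Mellin--Barnes (double contour integral) form of the limiting kernel. You instead derive it self-containedly from the differential equations (\ref{DiffEquationForf(x)})--(\ref{DiffEquationForg(y)}) by a Lagrange/bilinear-concomitant identity for the commuting factors $(D+\nu_j)$, $D=t\frac{d}{dt}$, which telescopes exactly as you claim, giving $\frac{d}{dt}B[u,v]=-(x-y)f(xt)g(yt)$. This buys a proof that uses only the ODEs and the definitions (\ref{PHIPSI}), and as a by-product it re-derives the continuity property (\ref{ContinuityOfK_M}); the price is the boundary analysis at $t=0,1$, which the contour-integral route gets for free. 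Your $t=1$ identification is in fact correct: comparing coefficients of $(D_x^af)(D_y^bg)$, the claim $-B[u,v](1)=\sum_j\phi_j(x)\psi_j(y)$ reduces (using $\nu_0=0$) to
\begin{equation*}
\sum_{j=b}^{M-a} e_{M-a-j}(\nu_{j+1},\ldots,\nu_M)\,e_{j-b}(\nu_0,\ldots,\nu_{j-1})=e_{M-a-b}(\nu_1,\ldots,\nu_M),
\end{equation*}
which holds because each degree-$(M-a-b)$ squarefree monomial is picked up for exactly one $j$ (the $b$-th smallest index outside its support). One caution: in this bookkeeping use the defining relation (\ref{Coefficientsa_i}), i.e.\ $\alpha_k=(-1)^{M-k}e_{M-k}(\nu_1,\ldots,\nu_M)$; the displayed formula $\alpha_i=(-1)^ie_{M-i}$ in the paper is off by an overall $(-1)^M$ for odd $M$, and taking it literally derails the sign matching.

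The one genuine soft spot is the boundary term at $t=0$ in degenerate cases. Your argument handles $\min_l\nu_l>0$ (all terms of $g$ carry positive powers) and the case of exactly one vanishing $\nu_l$ (then $g(0)$ is finite and the $j=0$ factor $f(0)\prod_{i\ge1}\nu_i$ vanishes). But if two or more of $\nu_1,\ldots,\nu_M$ vanish --- notably the natural case of all square matrices, $\nu_1=\cdots=\nu_M=0$ --- the expansion of $g$ at $0$ contains $\log^p y$ terms, and ``$-D_y$ kills constants'' is not enough: $-D_y$ turns $\log y$ into a constant, which subsequent factors $(\nu_i-D_y)$ with $\nu_i\neq0$ do \emph{not} annihilate, and for $j=0$ you face $g(yt)$ diverging logarithmically against a first factor that merely vanishes. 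The fix is quantitative rather than conceptual: if $r\ge2$ indices vanish, the $\mu=0$ part of $g$ lies in the span of $1,\log y,\ldots,\log^{r-1}y$, on which $D$ is nilpotent of order $r$; whenever the $u$-factor of a summand of $B$ has a nonzero limit, the corresponding operator $\prod_{i=0}^{j-1}(\nu_i-D)$ contains at least $r$ nilpotent factors and so annihilates that span, while in the remaining summands the $u$-factor is $O(t)$ against at worst a power of $\log t$. (Alternatively, prove the identity for generic distinct $\nu_l$ and conclude by continuity in the parameters.) With that point supplied, your proof is complete.
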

\begin{proof}
See Kuijlaars and Zhang \cite{Kuijlaars}, Section 5.2.
\end{proof}
Let $0\leq a_1<a_2<\ldots<a_{2m}$, and denote by $J$ the union of intervals of the form $(a_{2j-1},a_{2j})$. Thus $J=\bigcup\limits_{j=1}^{m}\left(a_{2j-1},a_{2j}\right)$.
Denote by $K_M$ the operator acting in $L^2(0,+\infty)$ with the kernel $K_M(x,y)\chi_J(y)$, where $\chi_j$ denotes the characteristic function of the interval $J$. The Fredholm determinant $\det(1-K_M)$ gives the probability that no particles of the limiting determinantal point process
lie in $J$. Thus we assume without proof that $\det(1-K_M)$ exists, and that it is not equal to zero. This implies that $\lambda=1$
is not an eigenvalue of $K_M$, and
that $(1-K_M)$ is invertible. Note that there are different ways to choose the operator $K_M$ such that the Fredholm determinant $\det\left(1-K_M\right)$
will give the gap probability. Our specific choice enables us to keep calculations simple.

Define the operators $R_M$ and $\rho_M$ by
\begin{equation}\label{R}
R_M=(1-K_M)^{-1}K_M=-1+(1-K_M)^{-1},\;\; \rho_M=(1-K_M)^{-1}.
\end{equation}
In addition, denote by $K'_M$ the operator with the kernel $K_M(y,x)\chi_J(y)$, and by $K_M^T$ the operator with the kernel $K_M(y,x)\chi_J(x)$.
In the same way we define the operators $R_M'$ and $\rho_M'$, i.e.
\begin{equation}\label{R'}
R_M'=(1-K_M')^{-1}K_M'=-1+(1-K_M')^{-1},\;\; \rho_M'=(1-K_M')^{-1},
\end{equation}
and the operators $R_M^{T}$ and $\rho_M^{T}$,
\begin{equation}
R_M^T=(1-K_M^T)^{-1}K_M^T=-1+(1-K_M^T)^{-1},\;\; \rho_M^T=(1-K_M^T)^{-1}.
\end{equation}
\subsection{Partial differential equations}\label{SectionPartial differential equations}
In this Section we construct a system of nonlinear partial differential equations associated with the Fredholm determinant $\det(1-K_M)$,
where the endpoints of the intervals, $a_1$, $\ldots$, $a_{2m}$ serve as independent variables. This system of differential equations will be interpreted as a Hamiltonian system in Section  \ref{SectionHamiltonianEquations}.
For $0\leq j\leq M$ and $1\leq l\leq m$ define the following quantities
$$
x_j^{(2l)}:=\sqrt{-1}\left(1-K_M\right)^{-1}\phi_j(a_{2l}),
$$
$$
y_j^{(2l)}:=\sqrt{-1}\left(1-K_M'\right)^{-1}\psi_j(a_{2l}),
$$
$$
x_j^{(2l-1)}:=\left(1-K_M\right)^{-1}\phi_j(a_{2l-1}),
$$
$$
y_j^{(2l-1)}:=\left(1-K_M'\right)^{-1}\psi_j(a_{2l-1}),
$$
\begin{equation}
\begin{split}
\xi_j:=(-1)^M\sum\limits_{k=1}^{2m}\int\limits_{a_{2k-1}}^{a_{2k}}\phi_0(x)\left(1-K_M'\right)^{-1}\psi_j(x)dx+(-1)^{M+1-j}e_{M+1-j}(\nu_0,\ldots,\nu_M),
\end{split}
\nonumber
\end{equation}
(where $e_{M+1-j}(\nu_0,\ldots,\nu_M)$ is the $({M+1-j})$th elementary symmetric polynomial  in the variables $\nu_0,\ldots,\nu_M$),
$$
\eta_j:=(-1)^M\sum\limits_{k=1}^{2m}\int\limits_{a_{2k-1}}^{a_{2k}}\phi_j(x)\left(1-K_M'\right)^{-1}\psi_M(x)dx.
$$
These quantities are  functions of parameters $a_1$, $\ldots$, $a_{2m}$. In Section \ref{SectionHamiltonianEquations} these quantities
will be interpreted as canonical coordinates of a Hamiltonian system.
\begin{prop}\label{Proposition Partial Differential Equations for x y xi eta}
The functions $x_j^{(k)}$, $y_j^{(k)}$, $\xi_j$ and $\eta_j$  satisfy  the following system
of partial differential equations\\
\textbf{(a)} For $0\leq j\leq M$, and $1\leq k\neq l\leq 2m$
 \begin{equation}\label{DynamicalEquation(a)}
 \frac{\partial x_j^{(l)}}{\partial a_k}=-\frac{x_j^{(k)}}{a_l-a_k}\sum\limits_{i=0}^Mx_i^{(l)}y_i^{(k)}.
 \end{equation}
 \textbf{(b)} For $0\leq j\leq M$, and $1\leq k\neq l\leq 2m$
 \begin{equation}\label{JMMS2}
 \frac{\partial y_j^{(l)}}{\partial a_k}=-\frac{y_j^{(k)}}{a_k-a_l}\sum\limits_{i=0}^Mx_i^{(k)}y_i^{(l)}.
 \end{equation}
 \textbf{(c)} For $0\leq j\leq M-1$, and $1\leq  l\leq 2m$
 \begin{equation}\label{JMMS3}
 a_l\frac{\partial x_j^{(l)}}{\partial a_l}=-\eta_jx_0^{(l)}-x_{j+1}^{(l)}
 +\underset{k\neq l}{\sum\limits_{k=1}^{2m}}x_j^{(k)}\frac{a_k}{a_l-a_k}\sum\limits_{i=0}^Mx_i^{(l)}y_i^{(k)},
 \end{equation}
 and for $j=M$ and $1\leq l\leq 2m$
 \begin{equation}\label{JMMS3'}
 \begin{split}
 a_l\frac{\partial x_M^{(l)}}{\partial a_l}&=-\eta_Mx_0^{(l)}+(-1)^{M+1}a_lx_0^{(l)}\\
 &+\sum\limits_{i=0}^M\xi_ix_i^{(l)}
 +\underset{k\neq l}{\sum\limits_{k=1}^{2m}}x_M^{(k)}\frac{a_k}{a_l-a_k}\sum\limits_{i=0}^Mx_i^{(l)}y_i^{(k)}.
 \end{split}
 \end{equation}
 \textbf{(d)} For $1\leq j\leq M$, and $1\leq  l\leq 2m$
 \begin{equation}\label{JMMS4}
 a_l\frac{\partial y_j^{(l)}}{\partial a_l}=-\xi_jy_M^{(l)}+y_{j-1}^{(l)}
 +\underset{k\neq l}{\sum\limits_{k=1}^{2m}}y_j^{(k)}\frac{a_k}{a_k-a_l}\sum\limits_{i=0}^Mx_i^{(k)}y_i^{(l)},
 \end{equation}
 and for $j=0$ and $1\leq l\leq 2m$
 \begin{equation}\label{JMMS4'}
 \begin{split}
 a_l\frac{\partial y_0^{(l)}}{\partial a_l}&=-\xi_0y_M^{(l)}+(-1)^{M}a_ly_M^{(l)}\\
 &+\sum\limits_{i=0}^M\eta_iy_i^{(l)}
 +\underset{k\neq l}{\sum\limits_{k=1}^{2m}}y_0^{(k)}\frac{a_k}{a_k-a_l}\sum\limits_{i=0}^Mx_i^{(k)}y_i^{(l)}.
 \end{split}
 \end{equation}
 \textbf{(e)} For $0\leq i,j\leq M$, and $1\leq  l\leq 2m$
 \begin{equation}\label{DynamicalEquation(e)}
 \frac{\partial}{\partial a_l}\xi_j=(-1)^{M+1}x_0^{(l)}y_j^{(l)}.
 \end{equation}
 \textbf{(f)} For $0\leq i,j\leq M$, and $1\leq  l\leq 2m$
 \begin{equation}\label{DynamicalEquation(f)}
 \frac{\partial}{\partial a_l}\eta_j=(-1)^{M+1}x_j^{(l)}y_M^{(l)}.
 \end{equation}
\end{prop}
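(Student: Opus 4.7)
The plan is to follow the Tracy--Widom/Its--Izergin--Korepin--Slavnov approach, adapted to the $(M+1)$-dimensional integrable structure. I introduce the resolvent functions $\Phi_j(x):=((1-K_M)^{-1}\phi_j)(x)$ and $\Psi_j(y):=((1-K_M')^{-1}\psi_j)(y)$, so that $x_j^{(k)}=\mu_k\Phi_j(a_k)$ and $y_j^{(k)}=\mu_k\Psi_j(a_k)$ with $\mu_{2l}=\sqrt{-1}$, $\mu_{2l-1}=1$. Two facts drive the entire calculation. First, from $\partial_{a_k}(1-K_M)^{-1}=(1-K_M)^{-1}(\partial_{a_k}K_M)(1-K_M)^{-1}$ together with $\partial_{a_k}\chi_J(y)=\epsilon_k\delta(y-a_k)$ (where $\epsilon_{2l-1}=-1$, $\epsilon_{2l}=+1$, so that $\mu_k^2=-\epsilon_k$), one obtains $\partial_{a_k}\Phi_j(x)=\epsilon_k\Phi_j(a_k)\widetilde\rho(x,a_k)$ for $k\neq l$, where $\widetilde\rho(x,y)=((1-K_M)^{-1}K_M(\cdot,y))(x)$. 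Second, by the standard integrable-operator theory the resolvent is again integrable, $\widetilde\rho(x,y)=\sum_{i=0}^M\Phi_i(x)\Psi_i(y)/(x-y)$, and it inherits the continuity property $\sum_i\Phi_i(x)\Psi_i(x)=0$ from (\ref{ContinuityOfK_M}).

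With this in hand, parts (a) and (b) are routine: evaluate $\partial_{a_k}\Phi_j(a_l)$ using the integrable form of $\widetilde\rho$ and substitute the $\mu_k$ prefactors; the sign $\epsilon_k$ is absorbed by $\mu_k^2$ and one recovers (\ref{DynamicalEquation(a)})--(\ref{JMMS2}). Parts (e) and (f) are direct Leibniz computations: differentiating the integral defining $\xi_j$ produces a boundary contribution $\mp\phi_0(a_l)\Psi_j(a_l)$ and an interior contribution from $\partial_{a_l}\Psi_j(x)$; these combine, via the integral equation $\Phi_0=\phi_0+K_M\Phi_0$, into the compact form $(-1)^{M+1}x_0^{(l)}y_j^{(l)}$ of (\ref{DynamicalEquation(e)}), and (\ref{DynamicalEquation(f)}) follows by the mirror argument.

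Parts (c) and (d) are the heart of the proof. Decompose $\frac{d}{da_l}\Phi_j(a_l)=(\partial_{a_l}\Phi_j)(a_l)+\Phi_j'(a_l)$. The first term involves the diagonal value $\widetilde\rho(a_l,a_l)$, computed by L'Hopital using the continuity property; after isolating the $k=l$ contribution it produces the $\sum_{k\neq l}$ sums in (\ref{JMMS3}). The second term requires $a_l\Phi_j'(a_l)=(xd/dx)\Phi_j|_{x=a_l}$, which one evaluates by commuting $xd/dx$ past $(1-K_M)^{-1}$ via the kernel identity
\[
(x\partial_x+y\partial_y)K_M(x,y)=-K_M(x,y)+\frac{1}{x-y}\sum_{j=0}^M\bigl[(x\phi_j'(x))\psi_j(y)+\phi_j(x)(y\psi_j'(y))\bigr],
\]
together with the recursions $x\phi_j'=-\phi_{j+1}$ for $j<M$ and $y\psi_j'=\psi_{j-1}-\alpha_{j-1}g$ for $j>0$, which follow immediately from (\ref{PHIPSI}). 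For $j<M$ this yields the main terms $-\eta_jx_0^{(l)}-x_{j+1}^{(l)}$ of (\ref{JMMS3}). For $j=M$, the formally undefined $\phi_{M+1}$ is eliminated using the Meijer $G$ ODE (\ref{DiffEquationForf(x)}): expanding $\prod_{j=0}^M(T+\nu_j)=\sum_{i=0}^{M+1}e_{M+1-i}(\nu_0,\ldots,\nu_M)T^i$ and applying the operator identity to $f$ rewrites $(xd/dx)^{M+1}f$ in terms of $xf=(-1)^{M+1}x\phi_0$ and lower-order $(xd/dx)^if=(-1)^{M-i+1}\phi_i$; this produces precisely the extra terms $(-1)^{M+1}a_lx_0^{(l)}+\sum_i\xi_ix_i^{(l)}$ in (\ref{JMMS3'}), with the constant shift $(-1)^{M+1-j}e_{M+1-j}(\nu)$ absorbed into the definition of $\xi_j$. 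Part (d) is the mirror computation on the $\Psi$-side using (\ref{DiffEquationForg(y)}).

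The main obstacle is in (c)--(d): correctly tracking how the elementary symmetric polynomial coefficients arising from the expanded Meijer ODE thread through the resolvent integrals to form exactly the quantities $\xi_j$ and $\eta_j$ (including the constant shift and all signs), and verifying that the diagonal L'Hopital contribution together with the commutator terms assemble into the precise $\sum_{k\neq l}$ sums. Everything else is standard integrable-operator bookkeeping.
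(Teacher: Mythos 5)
Your proposal is essentially the paper's own route: the Tracy--Widom scheme based on $\mathcal{Q}_j=(1-K_M)^{-1}\phi_j$, $\mathcal{P}_j=(1-K_M')^{-1}\psi_j$, the inner products $V_{i,j}$, the $a_k$-derivative formulas, the integrable form of the resolvent kernel, and the chain-rule decomposition at $x=a_l$, with the $\sqrt{-1}$ prefactors absorbing the alternating signs exactly as in the paper. The one real difference is local: the paper obtains the key rank-one identity $(x\partial_x+y\partial_y+1)K_M(x,y)=f(x)g(y)=(-1)^{M+1}\phi_0(x)\psi_M(y)$ in one line from the integral representation \eqref{EquationKernelAsIntegral}, whereas you differentiate the finite-rank form \eqref{The Generalized Bessel Kernel} and use the recursions plus both Meijer ODEs; this works, but you must then verify that the coefficient terms from expanding \eqref{DiffEquationForf(x)} cancel against the $\alpha_{j-1}\psi_M$ terms of the $\psi$-recursion (they do, by \eqref{Coefficientsa_i} and $\nu_0=0$), so the paper's derivation is cheaper at that point. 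Two bookkeeping points in (c)--(d) need to be made explicit: the commutator $[x\frac{d}{dx},K_M]$ of the \emph{operator} (kernel $K_M(x,y)\chi_J(y)$) also contains the terms $-\sum_k(-1)^k a_k K_M(x,a_k)\delta(y-a_k)$ from $y\partial_y\chi_J$, and after conjugation by $(1-K_M)^{-1}$ these, not the explicit $a_l$-derivative term, produce all the $R_M(a_l,a_k)$ contributions --- the $k=l$ one cancelling your chain-rule diagonal value and the rest giving the $\sum_{k\neq l}$ sums in \eqref{JMMS3}--\eqref{JMMS4'}; and for $j=M$ the term $(1-K_M)^{-1}(x\phi_0)$ must be rewritten as $x\mathcal{Q}_0-\sum_i V_{0,i}\mathcal{Q}_i$ via $[(1-K_M)^{-1},M]$, which is precisely how the non-constant parts of the $\xi_i$ enter \eqref{JMMS3'} (you allude to this but do not state the mechanism). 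With these routine insertions your argument coincides with the paper's proof.
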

 We will refer to equations (\ref{DynamicalEquation(a)})-(\ref{DynamicalEquation(f)}) as  the system of dynamical equations associated with the generalized Bessel kernel
 $K_M(x,y)$ defined by equation (\ref{The Generalized Bessel Kernel}).\\
 \textbf{Remarks.}
 \\
a) The quantity $(1-K_M)^{-1}\phi_j(a_k)$ means $\underset{x\rightarrow a_k}{\underset{x\in J}{\lim}}\left[(1-K_M)^{-1}\phi_j\right](x)$. The same meaning has the quantity $(1-K_M')\psi_j(a_k)$.\\
b) For the sine kernel partial differential equations similar to that of Proposition \ref{Proposition Partial Differential Equations for x y xi eta}
were first derived by  Jimbo, Miwa, M$\hat{\mbox{o}}$ri and Sato \cite{Jimbo}. These equations are called the JMMS equations in the random matrix literature. Tracy and Widom \cite{TracyWidomIntroduction}-\cite{TracyWidom} derived analogues of the JMMS equations for correlation kernels of the form
$$
\frac{A(x)B(y)-A(y)B(x)}{x-y}.
$$
Equations (\ref{DynamicalEquation(a)})-(\ref{DynamicalEquation(f)}) of Proposition \ref{Proposition Partial Differential Equations for x y xi eta}
are partial differential equations for a correlation kernel of even more general form (\ref{IntegrableForm}).\\
c) Assume that $M=1$ and $\nu_1=0$ (recall that $\nu_0=0$). In this case the correlation kernel has an especially simple form. Namely,
$$
f(x)=g(x)=G_{0,2}^{1,0}\left(x\biggl|\begin{array}{cc}
 0, & 0
\end{array}
\right)=J_{0}(2\sqrt{x}).
$$
The functions $\phi_0(x)$, $\phi_1(x)$, $\psi_0(y)$, and $\psi_1(y)$ are defined by equations
\begin{equation}
\phi_0(x)=f(x), \;\; \phi_1(x)=-x\frac{d}{dx}f(x),
\end{equation}
and
\begin{equation}
\psi_0(y)=y\frac{d}{dy}f(y),\;\; \psi_1(y)=f(y).
\end{equation}
In particular, we have
$$
\phi_0=\psi_1,\;\; \phi_1=-\psi_0,
$$
and the kernel of $K_{M=1}$ can be written as
$$
K_{M=1}(x,y)\chi_J(y),\;\; K_{M=1}(x,y)=\frac{A(x)B(y)-A(y)B(x)}{x-y},
$$
where the functions $A(x)$, $B(y)$ are defined by
$$
A(x)=f(x),\;\; B(y)=y\frac{d}{dy}f(y).
$$
In this case
the partial differential equations (\ref{DynamicalEquation(a)})-(\ref{DynamicalEquation(f)}) turn into partial differential equations for the Bessel kernel derived in Tracy and Widom
 \cite{Tracy}, see equations (1.9)-(1.14) of Tracy and Widom
 \cite{Tracy}.\\
d) The prefactors $\sqrt{-1}$ in the definition of $x_j^{(2l)}$ and $y_j^{(2l)}$
are needed for unified differential equations for $\xi_j$ and $\eta_j$, see equations (\ref{DynamicalEquation(e)}) and (\ref{DynamicalEquation(f)}). Namely, the prefactors $\sqrt{-1}$ enables one to avoid a case discussion for odd and even  $l$. It is unimportant which root, $\pm i$, is taken for $\sqrt{-1}$.
 \subsection{Hamiltonian structure of dynamical equations associated with the generalized Bessel kernel}\label{SectionHamiltonianEquations}
 Here we claim that the  system  of dynamical equations associated with the generalized Bessel kernel
 $K_M(x,y)$ can be understood as a  Hamiltonian system with $(2m+1)(M+1)$ canonical conjugate pairs.

 For $1\leq l\leq 2m$ set
 \begin{equation}\label{HlDefinition}
 H_l:=-a_l\frac{\partial}{\partial a_l}\log\left(\det\left(1-K_M\right)\right).
 \end{equation}
Introduce the Poisson brackets by
 \begin{equation}\label{Symplectic Structure}
 \left\{x_j^{(l)},y_i^{(k)}\right\}=\frac{1}{a_l}\delta_{l,k}\delta_{i,j},\;\; \left\{\xi_j,\eta_i\right\}=(-1)^M\delta_{i,j}.
 \end{equation}
 In other words, for two functions $F$ and $G$ of dynamical variables $\left(x_j^{(k)},\xi_j;y_j^{(k)},\eta_j\right)$
 (where $1\leq k\leq 2m$, and $0\leq j\leq M$) we  define the Poisson brackets by the formula
 \begin{equation}\label{Poisson Brackets}
 \left\{F,G\right\}:=
 \sum\limits_{k=1}^{2m}\frac{1}{a_k}\sum\limits_{j=0}^M\left(\frac{\partial F}{\partial x_j^{(k)}}\frac{\partial G}{\partial y_j^{(k)}}-\frac{\partial F}{\partial y_j^{(k)}}\frac{\partial G}{\partial x_j^{(k)}}\right)
 +(-1)^M\sum\limits_{j=0}^M\left(\frac{\partial F}{\partial\xi_j}\frac{\partial G}{\partial\eta_j}-\frac{\partial F}{\partial\eta_j}\frac{\partial G}{\partial\xi_j}\right).
 \end{equation}
 \begin{prop}\label{PropositionHamiltonianEquations}
 The system of dynamical equations (\ref{DynamicalEquation(a)})-(\ref{DynamicalEquation(f)}) associated with the generalized Bessel kernel  $K_M(x,y)$ (defined by equation (\ref{The Generalized Bessel Kernel})) can be written as
\begin{equation}\label{xdot1,ydot1}
\frac{\partial x_j^{(k)}}{\partial a_l}=\left\{x_j^{(k)},H_l\right\},\;\;\frac{\partial y_j^{(k)}}{\partial a_l}=\left\{y_j^{(k)},H_l\right\},
\end{equation}
and
\begin{equation}\label{zetadot1,etadot1}
\frac{\partial \xi_j}{\partial a_l}=\left\{\xi_j,H_l\right\},\;\;\frac{\partial\eta_j}{\partial a_l}=\left\{\eta_j,H_l\right\},
\end{equation}
where $0\leq j\leq M$, and $1\leq k,l\leq 2m$. The Hamiltonians are given explicitly by
\begin{equation}\label{Hl}
 \begin{split}
 &H_l=-\left(\sum\limits_{j=0}^M\eta_jy_j^{(l)}\right)x_0^{(l)}
 +(-1)^{M+1}a_lx_0^{(l)}y_M^{(l)}\\
 &-\sum\limits_{j=0}^{M-1}x_{j+1}^{(l)}y_j^{(l)}+y_M^{(l)}\sum\limits_{k=0}^M\xi_kx_k^{(l)}\\
 &+\underset{k\neq l}{\sum\limits_{k=1}^{2m}}\frac{a_k}{a_l-a_k}\sum\limits_{i,j=0}^Mx_i^{(k)}x_j^{(l)}y_i^{(l)}y_j^{(k)}.
 \end{split}
 \end{equation}
\end{prop}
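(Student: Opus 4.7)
The plan is to verify Proposition \ref{PropositionHamiltonianEquations} by direct substitution: take the explicit formula (\ref{Hl}) for $H_l$, compute the Poisson brackets $\{x_j^{(k)},H_l\}$, $\{y_j^{(k)},H_l\}$, $\{\xi_j,H_l\}$, $\{\eta_j,H_l\}$ from (\ref{Poisson Brackets}), and check that the results coincide with the right-hand sides of (\ref{DynamicalEquation(a)})--(\ref{DynamicalEquation(f)}) already provided by Proposition \ref{Proposition Partial Differential Equations for x y xi eta}. No new analytical input is required; the content is algebraic bookkeeping. Since $(x_j^{(k)},y_j^{(k)})$ and $(\xi_j,\eta_j)$ are canonically conjugate with the prescribed weights, the brackets with individual coordinates collapse to
\[
\{x_j^{(k)},H_l\}=\frac{1}{a_k}\frac{\partial H_l}{\partial y_j^{(k)}},\quad
\{y_j^{(k)},H_l\}=-\frac{1}{a_k}\frac{\partial H_l}{\partial x_j^{(k)}},\quad
\{\xi_j,H_l\}=(-1)^M\frac{\partial H_l}{\partial \eta_j},\quad
\{\eta_j,H_l\}=(-1)^{M+1}\frac{\partial H_l}{\partial \xi_j},
\]
so the verification reduces to reading off partial derivatives of $H_l$ from its five summands.

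I would organize the matching by source. Equations (\ref{DynamicalEquation(e)}) and (\ref{DynamicalEquation(f)}) are immediate: $\eta_j$ and $\xi_j$ enter (\ref{Hl}) only through the single terms $-(\sum_i\eta_i y_i^{(l)})x_0^{(l)}$ and $y_M^{(l)}\sum_k \xi_k x_k^{(l)}$, so $\partial H_l/\partial\eta_j=-x_0^{(l)}y_j^{(l)}$ and $\partial H_l/\partial\xi_j=x_j^{(l)}y_M^{(l)}$, and multiplication by $(-1)^M$, $(-1)^{M+1}$ yields the stated equations. For the off-diagonal flows (\ref{DynamicalEquation(a)}) and (\ref{JMMS2}) with $k\neq l$, the variables $y_j^{(l)}$ and $x_j^{(l)}$ appear in $H_k$ only inside the interaction sum $\sum_{k'\neq k}\frac{a_{k'}}{a_k-a_{k'}}\sum_{i,s}x_i^{(k')}x_s^{(k)}y_i^{(k)}y_s^{(k')}$; differentiation singles out the $k'=l$ summand, and the combinatorial prefactor $\frac{1}{a_l}\cdot\frac{a_l}{a_k-a_l}=-\frac{1}{a_l-a_k}$ produces exactly the required right-hand sides.

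The diagonal flows (\ref{JMMS3})--(\ref{JMMS4'}) receive contributions from every summand of $H_l$, and this is where the heaviest bookkeeping lives. For generic $0\le j\le M-1$ one finds
\[
\frac{\partial H_l}{\partial y_j^{(l)}}=-\eta_j x_0^{(l)}-x_{j+1}^{(l)}+\underset{k\neq l}{\sum_{k=1}^{2m}}x_j^{(k)}\frac{a_k}{a_l-a_k}\sum_{i=0}^M x_i^{(l)}y_i^{(k)},
\]
which after multiplication by $a_l\cdot(1/a_l)=1$ reproduces (\ref{JMMS3}). For $j=M$ the finite-difference contribution $-x_{j+1}^{(l)}$ is absent and is replaced by the $\xi$-coupling $+\sum_i\xi_i x_i^{(l)}$ (from the fourth summand) and by $(-1)^{M+1}a_l x_0^{(l)}$ (from the second summand), giving (\ref{JMMS3'}). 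The dual computation of $-\partial H_l/\partial x_j^{(l)}$ produces (\ref{JMMS4}) for $1\le j\le M$ and, in the boundary case $j=0$, picks up instead the $\eta$-coupling $+\sum_i\eta_i y_i^{(l)}$ and the term $(-1)^{M}a_l y_M^{(l)}$ to give (\ref{JMMS4'}).

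The main (non-conceptual) obstacle is simply the asymmetric handling of the boundary indices $j=0$ and $j=M$ together with sign-tracking between the $(x,y)$ sector (where the bracket carries a factor $1/a_k$) and the $(\xi,\eta)$ sector (where it carries $(-1)^M$); nothing else is unexpected. A complementary statement, which I would record as a separate remark rather than fold into this verification, is that the explicit Hamiltonian (\ref{Hl}) actually agrees with the logarithmic-derivative definition (\ref{HlDefinition}); that identification requires the standard resolvent identities for integrable operators with kernel (\ref{IntegrableForm}) and is logically independent of the Hamiltonian form of the flow.
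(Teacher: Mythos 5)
Your Poisson-bracket bookkeeping is correct and coincides with what the paper actually does for this half of the statement: the paper, too, verifies that Hamilton's equations with the explicit expression (\ref{Hl}) and the bracket (\ref{Poisson Brackets}) are term-by-term equivalent to (\ref{DynamicalEquation(a)})--(\ref{DynamicalEquation(f)}), and your treatment of the boundary indices $j=0$, $j=M$ and of the prefactors $\frac{1}{a_k}$ versus $(-1)^M$ checks out (I verified the derivatives you quote against (\ref{Hl}); they reproduce (\ref{JMMS3})--(\ref{JMMS4'}) exactly).

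The gap is in what you set aside. In the paper $H_l$ is \emph{defined} by (\ref{HlDefinition}) as $-a_l\frac{\partial}{\partial a_l}\log\det(1-K_M)$, so the clause ``the Hamiltonians are given explicitly by (\ref{Hl})'' is an integral part of Proposition \ref{PropositionHamiltonianEquations}, not a complementary remark that can be declared logically independent: verifying that the expression (\ref{Hl}) generates the flows cannot, by itself, identify it with the logarithmic derivative, since a Hamiltonian producing given flows is determined only up to additions that Poisson-commute with all coordinates (for instance arbitrary functions of $a_1,\ldots,a_{2m}$ alone). This identification is precisely the substantive computation in the paper's proof: by Proposition \ref{PropositionOneForm} one has $H_l=(-1)^l a_l R_M(a_l,a_l)$, and the diagonal value $R_M(a_l,a_l)$ is then evaluated from the explicit integrable form of the resolvent kernel (Proposition \ref{PropositionKernelsRmRmPrimeExplicitFormulae}) together with the differentiation formulas of Propositions \ref{PropositionMDQ} and \ref{PropositionMDP}; it is this evaluation, not generic ``standard resolvent identities for integrable operators,'' that produces the five summands of (\ref{Hl}), including the $\xi,\eta$ couplings and the interaction sum over $k\neq l$. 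Moreover, dropping the identification strips the proposition of the feature it is used for later (the tau-function interpretation and the gap-probability formula of Proposition \ref{PropositionNonlinearOrdinaryDiffEquation}), namely that these Hamiltonians are logarithmic derivatives of $\det(1-K_M)$. To make your proof complete, carry out (or explicitly invoke) that resolvent computation alongside your bracket verification.
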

\begin{prop}\label{PropositionInvolution}
The Hamiltonians $H_l$ are in involution. This means that for all $1\leq l,\rho\leq 2m$ the following condition is satisfied
\begin{equation}\label{Involution}
\left\{H_l,H_{\rho}\right\}=0,
\end{equation}
where the symplectic structure is defined by  equation (\ref{Symplectic Structure}).
\end{prop}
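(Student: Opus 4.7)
The case $l = \rho$ is trivial by antisymmetry of the Poisson bracket, so assume $l \neq \rho$. The plan is to exploit the tau-function relation $H_l = -a_l\,\partial_{a_l}\log\det(1-K_M)$ from (\ref{HlDefinition}): since $\log\det(1-K_M)$ is a single function of the $2m$ variables $a_1,\ldots,a_{2m}$, its mixed partials commute, and this translates into the symmetry
\[
a_\rho \frac{dH_l}{da_\rho}\bigg|_{\mathrm{sol}} = a_l \frac{dH_\rho}{da_l}\bigg|_{\mathrm{sol}}
\]
for total derivatives along the solution of the Hamiltonian flow.

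On the other hand, applying the chain rule to $H_l(x, y, \xi, \eta; a)$ and invoking Hamilton's equations (\ref{xdot1,ydot1})--(\ref{zetadot1,etadot1}) gives
\[
\frac{dH_l}{d a_\rho}\bigg|_{\mathrm{sol}} = \{H_l, H_\rho\} + \frac{\partial^{\mathrm{expl}} H_l}{\partial a_\rho},
\]
where $\partial^{\mathrm{expl}}/\partial a_\rho$ denotes the partial derivative of the explicit formula (\ref{Hl}) taken at fixed canonical variables. Substituting this into the previous display and using the antisymmetry of the Poisson bracket yields
\[
(a_l + a_\rho)\,\{H_l, H_\rho\}\big|_{\mathrm{sol}} = a_l \frac{\partial^{\mathrm{expl}} H_\rho}{\partial a_l} - a_\rho \frac{\partial^{\mathrm{expl}} H_l}{\partial a_\rho}.
\]
It remains to check that the right-hand side vanishes as an algebraic expression in the canonical variables.

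In (\ref{Hl}), the only piece with explicit $a_\rho$-dependence (when $l \neq \rho$) is the $k = \rho$ summand of the interaction sum, and a direct differentiation produces
\[
a_\rho \frac{\partial^{\mathrm{expl}} H_l}{\partial a_\rho} = \frac{a_l a_\rho}{(a_l - a_\rho)^2}\sum_{i,j=0}^M x_i^{(\rho)} x_j^{(l)} y_i^{(l)} y_j^{(\rho)},
\]
while the analogous computation gives $a_l\,\partial^{\mathrm{expl}} H_\rho/\partial a_l$ equal to the same prefactor times $\sum_{i,j} x_i^{(l)} x_j^{(\rho)} y_i^{(\rho)} y_j^{(l)}$. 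Relabeling $i \leftrightarrow j$ in one of the sums and using commutativity of multiplication shows the two are identical, so the right-hand side above vanishes and $\{H_l, H_\rho\} = 0$ on the solution.

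The main obstacle is that the tau-function argument delivers involution only \emph{along the solution trajectory} generated by the Fredholm determinant, which is nevertheless exactly what is needed for the isomonodromic interpretation of Section \ref{SubsectionIsomonodromicSystem}. If pointwise involution on the full phase space is required, one is forced into a direct computation of $\{H_l, H_\rho\}$ from (\ref{Hl}) and (\ref{Poisson Brackets}), splitting each $H_l$ into a ``local'' and an ``interaction'' piece and checking that the resulting local-local (via $\xi,\eta$), local-interaction, and interaction-interaction contributions cancel; the difficulty there lies in the careful index bookkeeping required to match all the cross-terms.
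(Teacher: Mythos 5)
Your on-shell computation is internally correct: combining the definition (\ref{HlDefinition}) with Hamilton's equations (\ref{xdot1,ydot1})--(\ref{zetadot1,etadot1}) and the equality of mixed partials of $\log\det(1-K_M)$ does yield $(a_l+a_\rho)\{H_l,H_\rho\}=a_l\,\partial^{\mathrm{expl}}_{a_l}H_\rho-a_\rho\,\partial^{\mathrm{expl}}_{a_\rho}H_l$ along the solution, and your evaluation of the two explicit derivatives (only the $k=\rho$, respectively $k=l$, summand of the interaction term contributes, each giving $\tfrac{a_la_\rho}{(a_l-a_\rho)^2}\sum_{i,j}x_i^{(\rho)}x_j^{(l)}y_i^{(l)}y_j^{(\rho)}$ after relabelling $i\leftrightarrow j$) is right; since the $a_j$ are distinct and nonnegative, $a_l+a_\rho>0$ and the bracket vanishes there. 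The genuine gap is that this establishes $\{H_l,H_\rho\}=0$ only at the special points of phase space picked out by the Fredholm determinant, i.e.\ when $x_j^{(k)},y_j^{(k)},\xi_j,\eta_j$ take the resolvent values $\mathcal{Q}_j(a_k)$, $\mathcal{P}_j(a_k)$, $V_{0,j}$, $V_{j,M}$. Proposition \ref{PropositionInvolution}, as the paper states and uses it, is the stronger assertion that the Poisson bracket of the two expressions (\ref{Hl}), viewed as functions of the canonical variables with parameters $a_1,\dots,a_{2m}$, vanishes \emph{identically}; that phase-space identity is what is invoked for Frobenius complete integrability of (\ref{xdot1,ydot1})--(\ref{zetadot1,etadot1}) for arbitrary initial data (Remark b) after the proposition), and it is exactly what the paper means by ``direct computation using (\ref{Hl}) and (\ref{Poisson Brackets})''. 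Your closing paragraph concedes this and defers that computation, so the step which constitutes the paper's entire proof is the one your argument omits.

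To actually prove the proposition you must carry out the verification you only sketch: for arbitrary values of $x_j^{(k)},y_j^{(k)},\xi_j,\eta_j$, split each $H_l$ into its local part and the interaction sum $\sum_{k\neq l}\tfrac{a_k}{a_l-a_k}\sum_{i,j}x_i^{(k)}x_j^{(l)}y_i^{(l)}y_j^{(k)}$, and check that in $\{H_l,H_\rho\}$ the local--local contributions (including those coming through the $(-1)^M$-weighted $\xi,\eta$ part of (\ref{Poisson Brackets})), the local--interaction contributions, and the interaction--interaction contributions (where, for a third index $k\neq l,\rho$, a partial-fraction cancellation among the factors $\tfrac{a_k}{a_l-a_k}$, $\tfrac{a_k}{a_\rho-a_k}$, $\tfrac{a_l}{a_\rho-a_l}$ is needed) all cancel, in the spirit of the analogous Tracy--Widom computation. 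Your tau-function argument is a nice consistency check and would serve as an alternative derivation of the closedness of the one-form in Remark c), but it cannot substitute for the identity in the canonical variables; as written, the proposal proves a strictly weaker statement.
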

\textbf{Remarks.}\\
a) A similar Hamiltonian interpretation is known for other systems of partial differential equations associated with correlation kernels of Random Matrix Theory, see
Tracy and Widom \cite{TracyWidomIntroduction}-\cite{TracyWidom}.\\
b) It is shown in Tracy and Widom \cite{TracyWidomIntroduction} (in
the context of the sine kernel) that the fact that the Hamiltonians
are in involution implies the complete integrability of partial
differential equations (\ref{xdot1,ydot1}), (\ref{zetadot1,etadot1})
(or, equivalently, of equations
(\ref{DynamicalEquation(a)})-(\ref{DynamicalEquation(f)})) in the
sense of Frobenius.\\
c) For each $1\leq l\leq m$ set
$$
h_l=\frac{\partial}{\partial a_l}\log\left(\det\left(1-K_M\right)\right).
$$
Introduce the $1$-form
$$
w(a_1,\ldots,a_{2m})=\sum\limits_{l=1}^{2m}h_lda_l.
$$
Using  the fact that the Hamiltonians $H_l$ are in involution (see Proposition \ref{PropositionInvolution}) we obtain that
$w(a_1,\ldots,a_{2m})$ is locally an exact differential
$$
w(a_1,\ldots,a_{2m})=d\log\tau.
$$
This equation defines (up to a multiple constant) the tau-function associated with the dynamical equations (\ref{xdot1,ydot1}) and (\ref{zetadot1,etadot1}).
We conclude that this tau-function  evaluated on a solution of the dynamical equations (\ref{xdot1,ydot1}) and (\ref{zetadot1,etadot1}) is given by the Fredholm determinant
\begin{equation}
\tau(a_1,\ldots,a_{2m})=\det\left(1-K_M\right).
\end{equation}
\subsection{The isomonodromic system associated with the generalized Bessel kernel}\label{SubsectionIsomonodromicSystem}
The aim of this Section is to construct isomonodromic  deformation equations associated with the generalized Bessel kernel, and with
the Fredholm determinant $\det(1-K_M)$. For this purpose set
\begin{equation}\label{E}
E=(-1)^{M+1}\left(\begin{array}{cccc}
                    0 & 0 & \ldots & 0 \\
                    0 & 0 & \ldots & 0 \\
                    \vdots &  &  &  \\
                    1 & 0 & \ldots & 0
                  \end{array}
\right),
\end{equation}
and
\begin{equation}\label{A,C}
A^{(l)}=\left(\begin{array}{cccc}
                x_0^{(l)}y_0^{(l)} &  x_0^{(l)}y_1^{(l)} & \ldots &  x_0^{(l)}y_M^{(l)} \\
                 x_1^{(l)}y_0^{(l)} &  x_1^{(l)}y_1^{(l)} & \ldots &  x_1^{(l)}y_M^{(l)} \\
                \vdots &  &  &  \\
                x_M^{(l)}y_0^{(l)} & x_M^{(l)}y_1^{(l)} & \ldots & x_M^{(l)}y_M^{(l)}
              \end{array}
\right),\;\;
C=\left(\begin{array}{ccccc}
          -\eta_0 & -1 & 0 & \ldots & 0 \\
          -\eta_1 & 0 & -1 & \ldots & 0 \\
          \vdots &  &  &  &  \\
          -\eta_{M-1} & 0 & 0 & \ldots & -1 \\
          \xi_0-\eta_M & \xi_1 & \xi_2 & \ldots & \xi_M
        \end{array}
\right),
\end{equation}
where $1\leq l\leq 2m$.
\begin{prop}\label{PropositionSchlesinger}
 The system of dynamical equations (\ref{DynamicalEquation(a)})-(\ref{DynamicalEquation(f)}) associated with the generalized Bessel kernel
 $K_M(x,y)$ defined by equation (\ref{The Generalized Bessel Kernel}) implies
\begin{equation}\label{Schlesinger1}
\frac{\partial}{\partial a_k}A^{(l)}=\frac{\left[A^{(l)}, A^{(k)}\right]}{a_l-a_k},\;\; 1\leq l\neq k\leq 2m,
\end{equation}
\begin{equation}
a_l\frac{\partial A^{(l)}}{\partial a_l}=\left[C+a_lE,A^{(l)}\right]+
\underset{k\neq l}{\sum\limits_{k=1}^{2m}}\frac{a_k}{a_l-a_k}\left[A^{(k)},A^{(l)}\right],\;\; 1\leq l\leq 2m,
\end{equation}
\begin{equation}\label{Schlesinger3}
\frac{\partial C}{\partial a_l}=\left[E, A^{(l)}\right],\;\; 1\leq l\leq 2m.
\end{equation}
Moreover, we have
$$
H_l=\Tr\left(CA^{(l)}\right)+a_l\Tr\left(EA^{(l)}\right)+\underset{k\neq l}{\sum\limits_{k=1}^{2m}}\frac{a_k}{a_l-a_k}\Tr\left(A^{(k)}A^{(l)}\right),
$$
where $1\leq l\leq 2m$.
\end{prop}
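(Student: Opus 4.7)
The strategy is to verify each of the four claims of the proposition by matrix-entry comparison, reading off both sides using the explicit forms of $A^{(l)}$, $C$, and $E$. The matrix $A^{(l)}$ has rank one with $(i,j)$-entry $x_i^{(l)}y_j^{(l)}$, the matrix $E$ has a single nonzero entry $(-1)^{M+1}$ in position $(M,0)$, and $C$ is a companion-type matrix whose only $a_l$-dependence sits in column $0$ (the $\eta_i$) and row $M$ (the $\xi_j$). Each of equations (\ref{DynamicalEquation(a)})--(\ref{DynamicalEquation(f)}) then corresponds bijectively to one stratum of matrix entries.

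First I would dispose of (\ref{Schlesinger1}). Differentiating $A^{(l)}_{ij}=x_i^{(l)}y_j^{(l)}$ with respect to $a_k$ ($k\neq l$) and substituting (\ref{DynamicalEquation(a)}), (\ref{JMMS2}) yields
\[
\frac{\partial A^{(l)}_{ij}}{\partial a_k}=\frac{1}{a_l-a_k}\Bigl(x_i^{(l)}y_j^{(k)}\!\sum_p x_p^{(k)}y_p^{(l)}-x_i^{(k)}y_j^{(l)}\!\sum_p x_p^{(l)}y_p^{(k)}\Bigr),
\]
which is exactly $(a_l-a_k)^{-1}[A^{(l)},A^{(k)}]_{ij}$ after expanding the commutator of two rank-one matrices. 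Next, (\ref{Schlesinger3}) reduces, after unpacking $[E,A^{(l)}]_{ij}=(-1)^{M+1}(\delta_{iM}x_0^{(l)}y_j^{(l)}-\delta_{j0}x_i^{(l)}y_M^{(l)})$, to a case check: interior entries $i<M,\ j>0$ give $0=0$; column $0$ entries use (\ref{DynamicalEquation(f)}); row $M$ entries use (\ref{DynamicalEquation(e)}); and the corner $(M,0)$ uses both (since $C_{M,0}=\xi_0-\eta_M$). All signs work out because the two pieces of $C$ are differentiated with opposite signs, matching the commutator structure of $E$.

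The trace identity for $H_l$ I would handle by direct expansion and comparison with (\ref{Hl}). One checks $\Tr(A^{(k)}A^{(l)})=\sum_{i,j}x_i^{(k)}x_j^{(l)}y_i^{(l)}y_j^{(k)}$, $\Tr(EA^{(l)})=(-1)^{M+1}x_0^{(l)}y_M^{(l)}$, and
\[
\Tr(CA^{(l)})=-\Bigl(\sum_j\eta_j y_j^{(l)}\Bigr)x_0^{(l)}-\sum_{j=0}^{M-1}x_{j+1}^{(l)}y_j^{(l)}+y_M^{(l)}\sum_k \xi_k x_k^{(l)},
\]
where the contribution of the $(M,0)$ entry $\xi_0-\eta_M$ reassembles the $\eta$-sum and $\xi$-sum into closed form. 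These three pieces reproduce (\ref{Hl}) line-by-line.

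The main obstacle is the diagonal equation $a_l\partial_{a_l}A^{(l)}=[C+a_lE,A^{(l)}]+\sum_{k\neq l}\frac{a_k}{a_l-a_k}[A^{(k)},A^{(l)}]$, because it simultaneously combines all four nontrivial parts of Proposition \ref{Proposition Partial Differential Equations for x y xi eta}. The plan is: differentiate $A^{(l)}_{ij}$ by the product rule and substitute (\ref{JMMS3})--(\ref{JMMS4'}); on the right hand side, expand
\[
[C,A^{(l)}]_{ij}=\sum_{p}C_{ip}x_p^{(l)}y_j^{(l)}-\sum_{p}x_i^{(l)}y_p^{(l)}C_{pj}
\]
using the sparse structure of $C$. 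The companion sub-diagonal of $C$ produces precisely the shift terms $-x_{j+1}^{(l)}y_i^{(l)}$ and $+x_i^{(l)}y_{j-1}^{(l)}$ that appear in (\ref{JMMS3}) and (\ref{JMMS4}); the first column of $C$ furnishes the $-\eta_i x_0^{(l)}y_j^{(l)}$ contribution paired with the corresponding piece from (\ref{JMMS3}), (\ref{JMMS3'}); the last row of $C$ produces $\xi_p$-terms matching (\ref{JMMS3'}), (\ref{JMMS4'}); and the $a_l E$ commutator supplies exactly the anomalous $(-1)^{M+1}a_l x_0^{(l)}y_M^{(l)}$-type contributions in the $j=M$ and $j=0$ cases. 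The off-diagonal sum in $k\neq l$ matches term-by-term with the $\sum_{k\neq l}x_i^{(k)}\frac{a_k}{a_l-a_k}\sum_p x_p^{(l)}y_p^{(k)}$ tails in (\ref{JMMS3})--(\ref{JMMS4'}) by the same rank-one commutator identity used in step one. The main bookkeeping is to track the case split $j=0$, $0<j<M$, $j=M$ consistently for both row and column indices; once this is organized, every term on the right hand side is accounted for exactly once.
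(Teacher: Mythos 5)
Your proposal is correct and takes essentially the same route as the paper, which simply states that equations (\ref{Schlesinger1})--(\ref{Schlesinger3}) and the trace formula for $H_l$ ``can be checked by simple calculations'' from (\ref{DynamicalEquation(a)})--(\ref{DynamicalEquation(f)}) and (\ref{Hl}); your entrywise verification exploiting the rank-one form $A^{(l)}_{ij}=x_i^{(l)}y_j^{(l)}$, the single nonzero entry of $E$, and the sparse structure of $C$ is precisely that calculation, and the displayed identities check out. (Only a cosmetic slip: the $-1$'s of $C$ lie on the superdiagonal, not the subdiagonal, which does not affect the argument.)
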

Consider the following linear system of ordinary differential equations with rational coefficients
\begin{equation}\label{Isomonodromy1}
\frac{d\Psi}{dz}=X(z)\Psi,
\end{equation}
where $X(z)$ is a $(M+1)\times (M+1)$ matrix defined by
\begin{equation}\label{Isomonodromy2}
X(z)=E+\frac{C-\sum\limits_{k=1}^{2m}A^{(k)}}{z}+\sum\limits_{k=1}^{2m}\frac{A^{(k)}}{z-a_k},
\end{equation}
and where $E$, $C$, $A^{(k)}$ are certain $(M+1)\times (M+1)$ matrices which are independent on $z$.
Consider the poles $a_1,\ldots,a_{2m}$ as deformation parameters of the coefficient matrix $X(z)$.
Thus
\begin{equation}\label{Isomonodromy3}
X(z)=X(z;a_1,\ldots,a_{2m}).
\end{equation}
Suppose that $\Psi(z;a_1,\ldots,a_{2m})$ (in addition to equation (\ref{Isomonodromy1})) also satisfies
a linear system of ordinary differential equations with respect to the parameters $a_1,\ldots, a_{2m}$
\begin{equation}\label{Isomonodromy4}
\frac{\partial\Psi}{\partial a_j}=\Theta_j(z)\Psi,\;\;\;\Theta_j(z)=-\frac{A^{(j)}}{z-a_j},\;\;\; 1\leq j\leq 2m.
\end{equation}
Note that we can write equations (\ref{Isomonodromy4}) as
\begin{equation}\label{Isomonodromy5}
d\Psi=\Theta(z)\Psi,\;\;\; \Theta(z)=\sum\limits_{j=1}^{2m}\Theta_j(z)da_j.
\end{equation}
The compatibility condition of equations (\ref{Isomonodromy1}) and (\ref{Isomonodromy5}) (i.e. $d\frac{d\Psi}{dz}=\frac{d}{dz}d\Psi$) gives
\begin{equation}\label{IsomonodromyDeformationEquation}
dX=\frac{\partial\Theta}{\partial z}+\left[\Theta,X\right].
\end{equation}
This equation is called the \textit{isomonodromy deformation equation}
in the  Jimbo, Miwa, M$\hat{\mbox{o}}$ri, Ueno and Sato theory \cite{Jimbo}-\cite{Jimbo2} of isomonodromy deformations, and it gives a condition for
deformation (\ref{Isomonodromy3}) to be isomonodromic. For the modern presentation of the theory
of isomonodromy deformations we refer the reader to the book by Fokas, Its, Kapaev and Novokshenov \cite{FokasItsKapaevNovokshenovBook}, Chapter 4.

Equation (\ref{IsomonodromyDeformationEquation}) leads to
\begin{equation}\label{IsomonodromyDeformationEquation1}
\frac{\partial X}{\partial a_j}=\frac{\partial \Theta_j}{\partial z}+\left[\Theta_j,X\right],\;\; j=1,\ldots,2m.
\end{equation}
Equating the coefficients in equation (\ref{IsomonodromyDeformationEquation1}) before the factors $\frac{1}{z}$ and $\frac{1}{z-a_k}$, $k=1,\ldots, 2m$ (which are independent because the variable $z$ is arbitrary)
we obtain equations (\ref{Schlesinger1})-(\ref{Schlesinger3}) for the matrix coefficients $A^{(k)}$, $k=1,\ldots,2m$, and $C$. Thus the following statement holds true.
\begin{prop}\label{PropositionIsomonodromyFormulation}
Equations (\ref{Schlesinger1})-(\ref{Schlesinger3}) follow from the
isomonodromy deformation equation
(\ref{IsomonodromyDeformationEquation}), with matrices $X(z)$ and
$\Theta(z)$ defined by equations (\ref{Isomonodromy2}),
(\ref{Isomonodromy4}), and (\ref{Isomonodromy5}).
\end{prop}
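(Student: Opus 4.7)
The plan is to execute the coefficient-matching program already outlined in the paragraph preceding the proposition: substitute the explicit forms of $X(z)$ from (\ref{Isomonodromy2}) and $\Theta_j(z)=-A^{(j)}/(z-a_j)$ into (\ref{IsomonodromyDeformationEquation1}), expand both sides as rational functions of $z$ with poles only at $0, a_1,\ldots,a_{2m}$, and compare residues. Writing $D = C - \sum_{k=1}^{2m}A^{(k)}$ for brevity, the left-hand side reads
\[
\frac{\partial X}{\partial a_j} = \frac{1}{z}\left(\frac{\partial C}{\partial a_j} - \sum_{k=1}^{2m}\frac{\partial A^{(k)}}{\partial a_j}\right) + \sum_{k=1}^{2m}\frac{1}{z-a_k}\frac{\partial A^{(k)}}{\partial a_j} + \frac{A^{(j)}}{(z-a_j)^2},
\]
and the double-pole term cancels against $\partial\Theta_j/\partial z = A^{(j)}/(z-a_j)^2$ on the right.

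The remaining work is to expand
\[
[\Theta_j, X] = -\frac{[A^{(j)},E]}{z-a_j} - \frac{[A^{(j)},D]}{z(z-a_j)} - \sum_{k=1}^{2m}\frac{[A^{(j)},A^{(k)}]}{(z-a_j)(z-a_k)}
\]
via the partial fractions $1/[z(z-a_j)] = a_j^{-1}\bigl(1/(z-a_j) - 1/z\bigr)$ and $1/[(z-a_j)(z-a_k)] = (a_j-a_k)^{-1}\bigl(1/(z-a_j) - 1/(z-a_k)\bigr)$ for $k\neq j$ (the $k=j$ term drops because $[A^{(j)},A^{(j)}]=0$, so no second-order pole is produced from this source). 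Matching the coefficient of $1/(z-a_k)$ for $k\neq j$ immediately gives $\partial A^{(k)}/\partial a_j = [A^{(j)},A^{(k)}]/(a_j-a_k)$, which is (\ref{Schlesinger1}). Matching the coefficient of $1/(z-a_j)$ yields an expression for $\partial A^{(j)}/\partial a_j$; multiplying through by $a_j$ and using the algebraic identity $1 - a_j/(a_j-a_k) = -a_k/(a_j-a_k)$ to combine terms rearranges it into the second equation of the proposition. Finally, matching the coefficient of $1/z$ determines $\partial C/\partial a_j - \sum_k \partial A^{(k)}/\partial a_j$; substituting the two previously derived relations produces a telescoping cancellation of the commutators with $C$ and with $\sum_k A^{(k)}$, leaving only $\partial C/\partial a_j = [E, A^{(j)}]$, which is (\ref{Schlesinger3}).

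The argument is almost entirely mechanical and the only delicate point is the derivation of (\ref{Schlesinger3}): unlike the other two relations, it is not read off from a single residue, but requires combining the $1/z$ coefficient with the already-derived formulas for $\partial A^{(k)}/\partial a_j$ and tracking the cancellation $[A^{(j)},C]/a_j + [C,A^{(j)}]/a_j = 0$ together with $[A^{(j)},A^{(j)}]=0$. The constant-in-$z$ equation is trivially $0=0$, since $E$ is independent of the deformation parameters and both $[\Theta_j,X]$ and $\partial X/\partial a_j$ vanish as $z\to\infty$; hence no additional identity is imposed and (\ref{Schlesinger1})--(\ref{Schlesinger3}) exhaust the content of (\ref{IsomonodromyDeformationEquation1}) for each $j=1,\ldots,2m$.
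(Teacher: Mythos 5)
Your proposal is correct and follows essentially the same route as the paper, which obtains Proposition \ref{PropositionIsomonodromyFormulation} precisely by substituting $X(z)$ and $\Theta_j(z)$ into (\ref{IsomonodromyDeformationEquation1}) and equating the coefficients of $\frac{1}{z}$ and $\frac{1}{z-a_k}$; you simply carry out the partial-fraction bookkeeping (including the cancellation of the double pole at $a_j$ and the recombination giving (\ref{Schlesinger3})) that the paper leaves implicit.
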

Now, let us recall the notion of the (Frobenius) complete integrability. Consider the following system of first-order partial differential equations
\begin{equation}\label{SystemofFirstOrderPartialDifferentialEquations}
\frac{\partial y_j(b_1,\ldots,b_n)}{\partial b_l}=\varphi_{j,l}\left(y_1(b_1,\ldots,b_n),\ldots,y_N(b_1,\ldots,b_n),b_1,\ldots,b_n\right),
\end{equation}
where $1\leq j\leq N$, and $1\leq l\leq n$. Here $(b_1,\ldots,b_n)\in\C^{n}$ are independent variables, $y_1,\ldots,y_N$ are unknown functions of
$(b_1,\ldots,b_n)$, and $\varphi_{j,l}$ are given holomorphic functions defined in a domain $\mathcal{D}\subset \C^{N}\times\C^n$.
\begin{defn}
The system of first-order partial differential equations (\ref{SystemofFirstOrderPartialDifferentialEquations}) is called completely integrable (in the sense of Frobenius)
if for any $\left(z_1,\ldots,z_N,\zeta_1,\ldots,\zeta_n\right)\in\mathcal{D}$ there exists a solution of (\ref{SystemofFirstOrderPartialDifferentialEquations}) such that
$$
y_j(\zeta_1,\ldots,\zeta_n)=z_j,\;\;\; 1\leq j\leq N.
$$
\end{defn}

It is known that isomonodromy  deformation equations associated with systems of linear ordinary differential
 equations with rational coefficients are completely integrable in the sense of Frobenius. Thus we conclude that
equations (\ref{Schlesinger1})-(\ref{Schlesinger3}) are completely
integrable
in the sense of Frobenius, see Jimbo, Miwa, and Ueno \cite{Jimbo1}.\\
\textbf{Remarks.}\\
a) The system
\begin{equation}
\left\{
  \begin{array}{ll}
   \frac{d\Psi}{dz}=X(z)\Psi  \\
   d\Psi=\Theta(z)\Psi
  \end{array}
\right.
\end{equation}
(where the matrices $X(z)$ and $\Theta(z)$ defined by equations
(\ref{Isomonodromy2}), (\ref{Isomonodromy4}), (\ref{Isomonodromy5})
and the matrices $E$, $C$, $A^{(l)}$ are defined by equations
(\ref{E}),  (\ref{A,C})) can be understood as a Lax representation
of the isomonodromy deformation equation
(\ref{IsomonodromyDeformationEquation}).\\
b) Equations (\ref{Schlesinger1})-(\ref{Schlesinger3}) are analogues
of the Schlesinger equations  of the theory of isomonodromy
deformations.
\subsection{The special case of $J=(0,s)$}\label{SubsectionSpecialCase}
This case corresponds to $m=1$, $a_1=0$, $a_2=s$.
\begin{prop}\label{PropositionNonlinearOrdinaryDiffEquation}
\textbf{(A)} In the special case  $J=(0,s)$ partial differential equations (\ref{JMMS3})-(\ref{DynamicalEquation(f)})
lead to the following system of non-linear ordinary differential equations
\begin{equation}\label{NonlinearOrdinaryDiffEquation1}
s\frac{dx_j(s)}{ds}=-\eta_j(s)x_0(s)-x_{j+1}(s),\;\; 0\leq j\leq M-1,
\end{equation}
\begin{equation}
s\frac{dx_M(s)}{ds}=-\eta_M(s)x_0(s)+(-1)^{M+1}sx_0(s)+\sum\limits_{i=0}^M\xi_i(s)x_i(s),
\end{equation}
\begin{equation}
s\frac{dy_j(s)}{ds}=-\xi_j(s)y_M(s)+y_{j-1}(s),\;\; 1\leq j\leq M,
\end{equation}
\begin{equation}
s\frac{dy_0(s)}{ds}=-\xi_0(s)y_M(s)+(-1)^Msy_M(s)+\sum\limits_{i=0}^M\eta_i(s)y_i(s),
\end{equation}
\begin{equation}
\frac{d\xi_j(s)}{ds}=(-1)^{M+1}x_0(s)y_j(s),\;\; 0\leq j\leq M,
\end{equation}
\begin{equation}\label{NonlinearOrdinaryDiffEquation6}
\frac{d\eta_j(s)}{ds}=(-1)^{M+1}x_j(s)y_M(s),\;\;0\leq j\leq M.
\end{equation}
\textbf{(B)}
Set
$$
F_M(s)=\det\left(1-K_M\right),
$$
where $K_M$ is the operator with the kernel
$
K_M(x,y)\chi_{(0,s)}(y)
$
acting on $L^2\left((0,+\infty)\right)$. We have
$$
F_M(s)=\exp\left((-1)^M\int\limits_0^s\log\left(\frac{s}{t}\right)x_0(t)y_M(t)dt\right),
$$
where $x_0(t)$, $y_M(t)$ are components of the solution $(x_j(t),y_j(t),\xi_j(t),\eta_j(t))$ of non-linear ordinary differential equations (\ref{NonlinearOrdinaryDiffEquation1})-(\ref{NonlinearOrdinaryDiffEquation6})
with the initial conditions
\begin{equation}
\begin{split}
&x_j(0)=\sqrt{-1}\phi_j(0),\;\;y_j(0)=\sqrt{-1}\psi_j(0),\;\;\eta_j(0)=0,\\
&\xi_j(0)=(-1)^{M+1-j}e_{M+1-j}(\nu_0,\ldots,\nu_M),\;\; 0\leq j\leq M.
\end{split}
\nonumber
\end{equation}
\end{prop}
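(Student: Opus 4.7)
The plan is to obtain part (A) by direct specialization of the PDE system, and part (B) by integrating the Hamiltonian $H_2$ along the resulting ODE flow.

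For part (A), I specialize the system of Proposition \ref{Proposition Partial Differential Equations for x y xi eta} to $m=1$, $a_1=0$, $a_2=s$. In each of the coupling sums $\sum_{k\neq l}$ appearing in (\ref{DynamicalEquation(a)})--(\ref{JMMS4'}), the only surviving index is $k=1$, and the prefactor $a_1/(a_l-a_1)$ (respectively $a_1/(a_1-a_l)$) vanishes because $a_1=0$. Writing $x_j(s):=x_j^{(2)}$, $y_j(s):=y_j^{(2)}$, $\xi_j(s):=\xi_j$, $\eta_j(s):=\eta_j$, equations (\ref{JMMS3})--(\ref{DynamicalEquation(f)}) with $l=2$ reduce directly to (\ref{NonlinearOrdinaryDiffEquation1})--(\ref{NonlinearOrdinaryDiffEquation6}). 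The stated initial conditions follow from $s\to 0$: the interval $(0,s)$ collapses, $K_M\to 0$ and $(1-K_M)^{-1}\to 1$, so $x_j(0)=\sqrt{-1}\phi_j(0)$ and $y_j(0)=\sqrt{-1}\psi_j(0)$, while the integral defining $\eta_j$ vanishes and only the constant piece of $\xi_j$ survives.

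For part (B), I begin from the specialization of (\ref{HlDefinition}),
$$
H_2(s) = -s\,\frac{d}{ds}\log F_M(s),
$$
and compute $dH_2/ds$ along the Hamiltonian flow. The key observation is that, by Proposition \ref{PropositionHamiltonianEquations} together with the tautology $\{H_2,H_2\}=0$, the total derivative $dH_2/ds$ along solutions equals the explicit partial derivative of $H_2$ with respect to $s$. Specializing (\ref{Hl}) to $m=1$, $a_1=0$ eliminates the coupling term, and the only explicitly $s$-dependent term that remains is $(-1)^{M+1}\,s\,x_0 y_M$, so
$$
\frac{dH_2}{ds} = (-1)^{M+1}\,x_0(s)\,y_M(s).
$$

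To integrate, I verify $H_2(0)=0$ directly from the reduced expression for $H_2$. The term with the explicit factor $s$ vanishes at $s=0$, and $\eta_j(0)=0$ kills $\bigl(\sum_j\eta_j y_j\bigr)x_0$. For the two remaining sums I use that the Mellin--Barnes series for $f$ gives $(x\tfrac{d}{dx})^j f(0)=0$ for $j\geq 1$, hence $\phi_j(0)=0$ and $x_j(0)=0$ for $j\geq 1$; combined with $\xi_0(0)=(-1)^{M+1}e_{M+1}(\nu_0,\ldots,\nu_M)=0$ (because $\nu_0=0$), both remaining contributions collapse. Thus $H_2(s)=(-1)^{M+1}\int_0^s x_0(t)y_M(t)\,dt$. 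Substituting into $\log F_M(s)=-\int_0^s H_2(u)/u\,du$ and exchanging the order of integration on the triangle $0<t<u<s$ via Fubini yields
$$
\log F_M(s) = (-1)^{M}\int_0^s x_0(t)\,y_M(t)\,\log(s/t)\,dt,
$$
which is the desired formula. The main obstacle is the verification that $H_2(0)=0$; everything else is algebraic specialization, one Poisson-bracket identity, and a Fubini exchange.
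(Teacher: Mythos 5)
Your part (A) and the initial conditions coincide with the paper's own argument: a pure specialization to $m=1$, $a_1=0$, $a_2=s$, with the coupling sums killed by the factor $a_1/(a_2-a_1)=0$, and the initial data read off from the definitions as $K_M\to 0$. For part (B) you take a genuinely different route to the key identity. The paper works at the level of the resolvent kernel: it sets $x=y=s$ in the partial differential equation (\ref{PDEForRM}) for $R_M(x,y)$, obtaining $\left(sR_M(s)\right)'=(-1)^{M+1}x_0(s)y_M(s)$ with $R_M(s)=R_M(s,s)=-\frac{d}{ds}\log F_M(s)$, and then integrates twice, the second time by parts. You instead stay inside the finite-dimensional Hamiltonian picture: from (\ref{HlDefinition}) and the explicit formula (\ref{Hl}) you get $dH_2/ds=\partial H_2/\partial s=(-1)^{M+1}x_0y_M$ along the flow, the chain-rule sum being exactly $\{H_2,H_2\}=0$ (this is valid even though the bracket (\ref{Symplectic Structure}) carries the explicit factor $1/a_l$, since antisymmetry kills the sum at each fixed $s$). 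Because $H_2=sR_M(s,s)$, your identity is the same as the paper's, just derived from the Hamiltonian data rather than from (\ref{PDEForRM}). What your route buys: it uses only material already packaged in Proposition \ref{PropositionHamiltonianEquations} and (\ref{HlDefinition}), and it makes explicit the vanishing of the integration constant, $H_2(0)=0$, via $\eta_j(0)=0$, $\phi_j(0)=0$ for $j\geq 1$ (the series of $f$ has a nonzero constant term since $\nu_0=0$, which $(x\,d/dx)^j$ annihilates for $j\geq1$), and $\xi_0(0)=(-1)^{M+1}\nu_0\nu_1\cdots\nu_M=0$ — a point the paper leaves implicit, as it silently uses $sR_M(s,s)\to0$ as $s\to0$. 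What the paper's route buys: it needs neither the Hamiltonian formalism nor the initial data, only the kernel identity (\ref{PDEForRM}) and boundedness of $R_M(s,s)$ near $0$. Your Fubini exchange on the triangle $0<t<u<s$ is equivalent to the paper's integration by parts; the regularity assumptions you tacitly make near $s=0$ (finiteness of $\psi_j(0)$, hence of $y_j(0)$, and boundedness of the solution) are the same ones implicit in the paper's statement of the initial conditions.
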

\textbf{Remarks.}\\
a) Note that in the case of a single interval $(0,s)$  equations (\ref{Schlesinger1})- (\ref{Schlesinger3}) take an especially simple form. Namely, we have
\begin{equation}
s\frac{d}{ds}A=\left[C+sE,A\right],\; \frac{d}{ds}C=[E,A].
\end{equation}
The Hamiltonian is given by
\begin{equation}
H=\mbox{Tr}\left(CA\right)+s\mbox{Tr}\left(EA\right).
\end{equation}
b) The Fredholm determinant $F_M(s)$ defined above gives the probability that no particles of the determinantal point process
defined by the generalized Bessel kernel (\ref{The Generalized Bessel Kernel}) lie in the interval $(0,s)$.\\
c) In the case $M=1$ the functions $x_0(t)$ and $y_M(t)$ coincide with each other, and we obtain the same result as in Tracy and Widom \cite{Tracy} (equation (1.19)).
In this case equations (\ref{NonlinearOrdinaryDiffEquation1})-(\ref{NonlinearOrdinaryDiffEquation6}) lead to a single ordinary differential equation
for $x_0(t)$ (or for $y_M(t)$) which is reducible to a special case of fifth Painlev$\acute{\mbox{e}}$ equation.\\
d) The initial conditions in Proposition \ref{PropositionNonlinearOrdinaryDiffEquation} come from the very definition of $x_j(s)$, $y_j(s)$, $\xi_j(s)$, and $\eta_j(s)$,
see Section  \ref{SectionPartial differential equations}.

\section{Proofs}\label{SectionProofs}
Our first task is to prove Proposition \ref{Proposition Partial Differential Equations for x y xi eta}. This can be done by adopting methods
developed in  Tracy and Widom \cite{TracyWidomIntroduction}-\cite{TracyWidomSystems} to the situation where the operators are defined by the generalized Bessel kernel $K_M(x,y)$
(equation(\ref{The Generalized Bessel Kernel})). Note that the functions $\phi_j$, $\psi_j$ in formula (\ref{The Generalized Bessel Kernel}) are given explicitly in terms of Meijer's $G$-functions, see equations (\ref{f}), (\ref{g}), and (\ref{PHIPSI}).
\subsection{The functions $\mathcal{Q}_j(x;a_1,\ldots,a_{2m})$, $\mathcal{P}_j(x;a_1,\ldots,a_{2m})$, and $V_{i,j}\left(a_1,\ldots,a_{2m}\right)$}
For $0\leq j\leq M$ introduce the following functions
\begin{equation}\label{FunctionQjDefinition}
\mathcal{Q}_j(x;a_1,\ldots,a_{2m})=(1-K_M)^{-1}\phi_j(x),
\end{equation}
and
\begin{equation}\label{FunctionPjDefinition}
\mathcal{P}_j(x;a_1,\ldots,a_{2m})=(1-K_M')^{-1}\psi_j(x).
\end{equation}
In addition, for $0\leq i,j\leq M$ set
\begin{equation}
V_{i,j}\left(a_1,\ldots,a_{2m}\right)=\int\limits_{J}\phi_i(x)\mathcal{P}_j(x;a_1,\ldots,a_{2m})dx.
\end{equation}
\begin{prop}\label{PropositionMDQ} The functions $\mathcal{Q}_j(x;a_1,\ldots,a_{2m})$ defined by equation (\ref{FunctionQjDefinition}) satisfy  a system of partial differential equations. Namely,
for $0\leq j\leq M-1$ we have
\begin{equation}\label{PartialForQ1}
\begin{split}
&x\frac{\partial}{\partial x}\mathcal{Q}_j(x;a_1,\ldots,a_{2m})
=(-1)^{M+1}V_{j,M}\left(a_1,\ldots,a_{2m}\right)\mathcal{Q}_0(x;a_1,\ldots,a_{2m})
\\
&-\mathcal{Q}_{j+1}(x;a_1,\ldots,a_{2m})
-\sum\limits_{k=1}^{2m}(-1)^ka_kR_M(x,a_k)\mathcal{Q}_j(a_k;a_1,\ldots,a_{2m}),
\end{split}
\end{equation}
and for $j=M$ we have
\begin{equation}\label{PartialForQ2}
\begin{split}
&x\frac{\partial}{\partial x}\mathcal{Q}_M(x;a_1,\ldots,a_{2m})=(-1)^{M+1}V_{M,M}\left(a_1,\ldots,a_{2m}\right)\mathcal{Q}_0(x;a_1,\ldots,a_{2m})\\
&+(-1)^{M+1}x\mathcal{Q}_{0}(x;a_1,\ldots,a_{2m})+\sum\limits_{k=0}^M(-1)^{M+1-k}e_{M+1-k}(\nu_0,\ldots,\nu_M)\mathcal{Q}_k(x;a_1,\ldots,a_{2m})\\
&+
(-1)^M\sum\limits_{k=0}^MV_{0,k}\left(a_1,\ldots,a_{2m}\right)\mathcal{Q}_k(x;a_1,\ldots,a_{2m})\\
&-\sum\limits_{k=1}^{2m}(-1)^ka_kR_M(x,a_k)\mathcal{Q}_M(a_k;a_1,\ldots,a_{2m}).
\end{split}
\end{equation}
In addition, for $1\leq k\leq 2m$, and for $0\leq j\leq M$ we have
\begin{equation}\label{PartialAFromQj(x)}
\begin{split}
\frac{\partial}{\partial a_k}\mathcal{Q}_j(x;a_1,\ldots,a_{2m})=(-1)^kR_M(x,a_k)\mathcal{Q}_j(a_k;a_1,\ldots,a_{2m}).
\end{split}
\end{equation}
In the formulae just written above $R_M(x,a_k)$ stands for the kernel of $R_M$ at points $x$ and $a_k$.
\end{prop}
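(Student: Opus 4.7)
The plan is to follow the Tracy--Widom commutator method, adapted to the rank-$(M{+}1)$ integrable structure of $K_M$. The identities split naturally: \eqref{PartialAFromQj(x)} is almost formal, while \eqref{PartialForQ1}--\eqref{PartialForQ2} require the Meijer $G$-function ODE \eqref{DiffEquationForf(x)} together with the integrable form of $K_M$.

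For \eqref{PartialAFromQj(x)}, I would differentiate $\rho_M=(1-K_M)^{-1}$ in $a_k$ via $\partial_{a_k}\rho_M=\rho_M(\partial_{a_k}K_M)\rho_M$. Since $J=\bigcup_j(a_{2j-1},a_{2j})$, the identity $\partial_{a_k}\chi_J=(-1)^k\delta(y-a_k)$ (negative at lower endpoints, positive at upper ones) shows that $\partial_{a_k}K_M$ is the rank-one operator $h\mapsto(-1)^kK_M(\cdot,a_k)h(a_k)$; applying $\rho_M(\partial_{a_k}K_M)\rho_M$ to $\phi_j$ and using $\rho_M[K_M(\cdot,a_k)]=R_M(\cdot,a_k)$ gives the claim.

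For the spatial derivatives, write $D=x\,d/dx$ and apply $D$ to $(1-K_M)\mathcal{Q}_j=\phi_j$ to obtain the master identity
\[
D\mathcal{Q}_j=\rho_M D\phi_j+\rho_M[D,K_M]\mathcal{Q}_j.
\]
The input $D\phi_j$ is read off from $\phi_j=(-1)^{M-j+1}D^jf$: one gets $D\phi_j=-\phi_{j+1}$ for $j\le M-1$, and for $j=M$ the ODE $\prod_{k=0}^M(D+\nu_k)f=-xf$ expands into $D\phi_M=(-1)^{M+1}x\phi_0+\sum_{k=0}^M(-1)^{M+1-k}e_{M+1-k}(\nu_0,\ldots,\nu_M)\phi_k$. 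For $[D,K_M]$, integrating by parts in $\int_J K_M(x,y)yh'(y)\,dy$ produces the boundary contribution $-\sum_k(-1)^k a_k K_M(x,a_k)h(a_k)$ plus the bulk density $(x\partial_x+y\partial_y+1)K_M(x,y)$; the representation \eqref{EquationKernelAsIntegral} combined with the telescope $\tfrac{d}{dt}[tf(xt)g(yt)]=f(xt)g(yt)+t(xf'(xt)g(yt)+yf(xt)g'(yt))$ collapses this bulk density to $f(x)g(y)=(-1)^{M+1}\phi_0(x)\psi_M(y)$ (using $\alpha_M=1$, so $\psi_M=g$).

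Assembling these ingredients using the bilinear symmetry $V_{j,k}=\int_J\phi_j\mathcal{P}_k\,dx=\int_J\psi_k\mathcal{Q}_j\,dy$ (a routine consequence of the Neumann series for $\rho_M$ and $\rho_M'$ together with variable relabeling in each multi-integral) produces \eqref{PartialForQ1} directly. The main obstacle is the $j=M$ case: the Meijer ODE injects the term $(-1)^{M+1}\rho_M[x\phi_0]$, and $\rho_M$ does not commute with multiplication by $x$. To rewrite this term I would use the commutator identity $[M_x,\rho_M]=\rho_M[M_x,K_M]\rho_M$, which exploits the integrable structure: by \eqref{The Generalized Bessel Kernel} and the continuity condition \eqref{ContinuityOfK_M}, $(x-y)K_M(x,y)=\sum_k\phi_k(x)\psi_k(y)$, so $[M_x,K_M]$ is the rank-$(M{+}1)$ operator $h\mapsto\sum_k\phi_k(x)\int_J\psi_k h\,dy$. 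This yields $\rho_M[x\phi_0](x)=x\mathcal{Q}_0(x)-\sum_k V_{0,k}\mathcal{Q}_k(x)$, which combined with $D\phi_M$ and $\rho_M[D,K_M]\mathcal{Q}_M$ produces \eqref{PartialForQ2}.
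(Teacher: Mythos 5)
Your proposal is correct and follows essentially the same route as the paper: the Tracy--Widom commutator method with $MD=x\,d/dx$, the integral representation (\ref{EquationKernelAsIntegral}) collapsing the bulk part of $[MD,K_M]$ to $(-1)^{M+1}\phi_0(x)\psi_M(y)$, integration by parts for the endpoint delta terms, the Meijer ODE (\ref{DiffEquationForf(x)}) for $MD\phi_M$, and the rank-$(M+1)$ commutator $[M,K_M]$ to handle $\rho_M[x\phi_0]$. The only cosmetic difference is that you apply the commutators directly to $\mathcal{Q}_j$ and invoke the symmetry $V_{j,k}=\int_J\phi_j\mathcal{P}_k=\int_J\psi_k\mathcal{Q}_j$ via the Neumann series, where the paper encodes the same duality through the transpose operator $K_M^T$ and the identity $\mathcal{P}_M^{*}=\mathcal{P}_M\chi_J$.
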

\begin{proof}
Let us denote by $D$ the operator of differentiation, and by $M$ the operator of multiplication. Thus we have
$$
D\varphi(x)=\frac{d}{dx}\varphi(x), \;\;\;M\varphi(x)=x\varphi(x).
$$
With this notation we have
\begin{equation}\label{MDQj}
\begin{split}
&MD\mathcal{Q}_j(x)=MD(1-K_M)^{-1}\phi_j(x)\\
&=[MD,(1-K_M)^{-1}]\phi_j(x)+(1-K_M)^{-1}MD\phi_j(x), \;\; 0\leq j\leq M.
\end{split}
\end{equation}
Let us first compute the first term in the right-hand side of the equation just written above.
We use the identity
\begin{equation}\label{SimpleOperatorIdentity}
[MD,(1-K_M)^{-1}]=(1-K_M)^{-1}[MD,K_M](1-K_M)^{-1}.
\end{equation}
Thus we need a formula for the commutator $[MD,K_M]$. To find such a formula observe that
for any operator $L$  with the kernel $L(x,y)$ the following identity holds true
\begin{equation}\label{MDL}
\left[MD,L\right](x,y)=\left((MD)_x+(MD)_y+I\right)L(x,y).
\end{equation}
Using this identity we obtain
\begin{equation}\label{P11}
\begin{split}
&\left[MD,K_M\right](x,y)=\left(\left(x\frac{\partial}{\partial x}+y\frac{\partial}{\partial y}\right)K_M(x,y)\right)\chi_J(y)\\
&+y\left(\frac{\partial}{\partial y}\chi_J(y)\right)K_M(x,y)+K_M(x,y)\chi_J(y).
\end{split}
\end{equation}
Using formula (\ref{EquationKernelAsIntegral}) we find that
\begin{equation}\label{P12}
\begin{split}
\left(x\frac{\partial}{\partial x}+y\frac{\partial}{\partial y}\right)K_M(x,y)=\int\limits_0^1t\frac{\partial}{\partial t}\left(f(tx)g(ty)\right)dt\\
=f(x)g(y)-K_M(x,y)=(-1)^{M+1}\phi_0(x)\psi_M(y)-K_M(x,y).
\end{split}
\end{equation}
In addition, we have
\begin{equation}\label{delta}
\frac{\partial}{\partial y}\chi_J(y)=\sum\limits_{k=1}^{2m}(-1)^{k-1}\delta(y-a_k).
\end{equation}
Equations (\ref{P11}), (\ref{P12}), and (\ref{delta}) give us
\begin{equation}\label{FormulaFor[MD,K_M](x,y)}
\left[MD,K_M\right](x,y)=(-1)^{M+1}\phi_0(x)\psi_M(y)\chi_J(y)-\sum\limits_{k=1}^{2m}(-1)^ka_kK_M(x,a_k)\delta(y-a_k).
\end{equation}
Using identity (\ref{SimpleOperatorIdentity}) together with formula (\ref{FormulaFor[MD,K_M](x,y)}) we find the kernel of $[MD, (1-K_M)^{-1}]$.
Namely,
\begin{equation}\label{CommutatorMD(1-K)}
\left[MD, (1-K_M)^{-1}\right](x,y)=(-1)^{M+1}Q_0(x)\mathcal{P}^*_M(y)-\sum\limits_{k=1}^{2m}(-1)^ka_kR_M(x,a_k)\rho_M(a_k,y),
\end{equation}
where $\mathcal{P}^*_M(y)$ is defined by
$$
\mathcal{P}^*_M(y)=(1-K_M^T)^{-1}\widetilde{\psi_M}(y),\;\; \widetilde{\psi_M}(y)=\psi_M(y)\chi_J(y).
$$
Therefore,
\begin{equation}
\left[MD, (1-K_M)^{-1}\right]\phi_j(x)=(-1)^{M+1}Q_0(x)\left(\mathcal{P}_M^*,\phi_j\right)-\sum\limits_{k=1}^{2m}(-1)^ka_kR_M(x,a_k)\mathcal{Q}_j(a_k),
\end{equation}
where
$$
\left(\mathcal{P}_M^*,\phi_j\right)=\int\limits_0^{\infty}\mathcal{P}_M^*(x)\phi_j(x)dx.
$$
Note that
\begin{equation}\label{Pzvezda}
\mathcal{P}_M^*(y)=\mathcal{P}_M(y)\chi_J(y),
\end{equation}
as it can be see from the very definition of the operators $K_M^T$ and $K_M'$.
Thus we arrive to the formula
\begin{equation}\label{[MD,(I-K)]phi}
\begin{split}
\left[MD, (1-K_M)^{-1}\right]\phi_j(x)&=(-1)^{M+1}Q_0(x)V_{j,M}(a_1,\ldots,a_{2m})\\
&-\sum\limits_{k=1}^{2m}(-1)^ka_kR_M(x,a_k)\mathcal{Q}_j(a_k),
\end{split}
\end{equation}
which holds true for all $0\leq j\leq M$.

Now let us compute $(1-K_M)^{-1}MD\phi_j(x)$. Equation (\ref{PHIPSI}) implies
\begin{equation}\label{(1-K)MDphi}
(1-K_M)^{-1}MD\phi_j(x)=-\mathcal{Q}_{j+1}(x),\;\;0\leq j\leq M-1.
\end{equation}
Thus equation (\ref{PartialForQ1}) in the statement of the Proposition follows from equations (\ref{MDQj}), (\ref{[MD,(I-K)]phi})
and (\ref{(1-K)MDphi}).

Let us consider the case when $j=M$. The definition of functions $\phi_j(x)$ (see equation (\ref{PHIPSI})) implies
$$
MD\phi_M(x)=-\left(x\frac{d}{dx}\right)^{M+1}f(x).
$$
Equation (\ref{DiffEquationForf(x)}) can be written as
$$
\left(x\frac{d}{dx}\right)^{M+1}f(x)=(-1)^Mx\phi_0(x)+\sum\limits_{k=0}^M(-1)^{M-k}e_{M+1-k}(\nu_0,\ldots,\nu_M)\phi_k(x),
$$
so
$$
MD\phi_M(x)=(-1)^{M+1}x\phi_0(x)+\sum\limits_{k=0}^M(-1)^{M-k+1}e_{M+1-k}(\nu_0,\ldots,\nu_M)\phi_k(x),
$$
and
\begin{equation}\label{(1-K)MDphiM(x)}
\begin{split}
&(1-K_M)^{-1}MD\phi_M(x)=(-1)^{M+1}(1-K_M)^{-1}M\phi_0(x)\\
&+\sum\limits_{k=0}^M(-1)^{M-k+1}e_{M+1-k}(\nu_0,\ldots,\nu_M)\mathcal{Q}_k(x).
\end{split}
\end{equation}
It remains to compute $(1-K_M)^{-1}M\phi_0(x)$. We have
\begin{equation}\label{(1-K)Mphi0(x)}
\begin{split}
&(1-K_M)^{-1}M\phi_0(x)=[(1-K_M)^{-1},M]\phi_0(x)+M(1-K_M)^{-1}\phi_0(x)\\
&=[(1-K_M)^{-1},M]\phi_0(x)+x\mathcal{Q}_0(x).
\end{split}
\end{equation}
Moreover, using the formula
$$
\left[(1-K_M)^{-1},M\right]=(1-K_M)^{-1}[K_M,M](1-K_M)^{-1},
$$
and the fact that
$$
[K_M,M](x,y)=-\sum\limits_{j=0}^M\phi_j(x)\psi_j(y)\chi_J(y),
$$
we obtain
$$
\left[(1-K_M)^{-1},M\right](x,y)=-\sum\limits_{j=0}^M\mathcal{Q}_j(x)\mathcal{P}_j(y)\chi_J(y).
$$
So,
\begin{equation}\label{[(1-K)M]phi_0(x)}
[\left(1-K_M\right)^{-1},M]\phi_0(x)=-\sum\limits_{j=0}^M\mathcal{Q}_j(x)V_{0,j}(a_1,\ldots,a_{2m}).
\end{equation}
Inserting (\ref{(1-K)Mphi0(x)}) and (\ref{[(1-K)M]phi_0(x)}) into equation (\ref{(1-K)MDphiM(x)}) gives
\begin{equation}\label{(1-K)MDphiM(x)Finite}
\begin{split}
&(1-K_M)^{-1}MD\phi_M(x)=(-1)^{M+1}x\mathcal{Q}_0(x)+(-1)^M\sum\limits_{j=0}^M\mathcal{Q}_j(x)V_{0,j}(a_1,\ldots,a_{2m})\\
&+\sum\limits_{k=0}^M(-1)^{M-k+1}e_{M+1-k}(\nu_0,\ldots,\nu_M)\mathcal{Q}_k(x).
\end{split}
\end{equation}
Equation (\ref{PartialForQ2}) in the statement of the Proposition follows from
equations (\ref{MDQj}), (\ref{[MD,(I-K)]phi}), and (\ref{(1-K)MDphiM(x)Finite}).

It remains to derive equation (\ref{PartialAFromQj(x)}). We have
\begin{equation}
\frac{\partial}{\partial a_k}\mathcal{Q}_j(x;a_1,\ldots,a_{2m})=\frac{\partial}{\partial a_k}(1-K_M)^{-1}\phi_j(x).
\end{equation}
Now,
$$
\frac{\partial}{\partial a_k}(1-K_M)^{-1}=(1-K_M)^{-1}\frac{\partial K_M}{\partial a_k}(1-K_M)^{-1},
$$
and
$$
\frac{\partial K_M}{\partial a_k}(x,y)=(-1)^kK_M(x,a_k)\delta(y-a_k).
$$
Thus the kernel of $\frac{\partial}{\partial a_k}(1-K_M)^{-1}$ can be written as
\begin{equation}\label{Partiala(1-K)}
\frac{\partial}{\partial a_k}(1-K_M)^{-1}(x,y)=(-1)^kR_M(x,a_k)\rho_M(a_k,y),
\end{equation}
and we arrive to equation (\ref{PartialAFromQj(x)}).

\end{proof}
\begin{prop}\label{PropositionMDP}
The partial differential equations for the functions $\mathcal{P}_j(y;a_1,\ldots,a_{2m})$ defined by equation (\ref{FunctionPjDefinition}) can be written as follows.
For $j=0$ we have
\begin{equation}\label{PartialP1}
\begin{split}
&y\frac{\partial}{\partial y}\mathcal{P}_0(y;a_1,\ldots,a_{2m})=(-1)^{M+1}V_{0,0}(a_1,\ldots,a_{2m})\mathcal{P}_M(y;a_1,\ldots,a_{2m})\\
&+(-1)^My\mathcal{P}_M(y;a_1,\ldots,a_{2m})+(-1)^M\sum\limits_{j=0}^MV_{j,M}(a_1,\ldots,a_{2m})\mathcal{P}_j(y;a_1,\ldots,a_{2m})\\
&+\sum\limits_{k=1}^{2m}(-1)^{k-1}a_kR_M'(y,a_k)\mathcal{P}_0(a_k;a_1,\ldots,a_{2m}),
\end{split}
\end{equation}
and for $1\leq j\leq M$ we have
\begin{equation}\label{PartialP2}
\begin{split}
&y\frac{\partial}{\partial y}\mathcal{P}_j(y;a_1,\ldots,a_{2m})=\mathcal{P}_{j-1}(y;a_1,\ldots,a_{2m})\\
&+(-1)^{M-j}e_{M-j+1}(\nu_1,\ldots,\nu_M)\mathcal{P}_M(y;a_1,\ldots,a_{2m})\\
&+(-1)^{M+1}V_{0,j}(a_1,\ldots,a_{2m})\mathcal{P}_M(y;a_1,\ldots,a_{2m})\\
&
+\sum\limits_{k=1}^{2m}(-1)^{k-1}a_kR_M'(y,a_k)\mathcal{P}_j(a_k;a_1,\ldots,a_{2m}).
\end{split}
\end{equation}
In addition, for all $k$, $1\leq k\leq 2m$, we have
\begin{equation}\label{PartialP3}
\frac{\partial}{\partial a_k}\mathcal{P}_j(y;a_1,\ldots,a_{2m})=(-1)^kR_M'(y,a_k)\mathcal{P}_j(a_k;a_1,\ldots,a_{2m}).
\end{equation}
\end{prop}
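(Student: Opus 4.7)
The plan is to adapt the proof of Proposition \ref{PropositionMDQ} to the resolvent $(1-K_M')^{-1}$. Starting from $\mathcal{P}_j=(1-K_M')^{-1}\psi_j$, I would write
\[
y\frac{\partial}{\partial y}\mathcal{P}_j=[MD,(1-K_M')^{-1}]\psi_j+(1-K_M')^{-1}MD\psi_j
\]
and handle the two pieces separately, using the kernel identity $[MD,L](y,x)=((MD)_y+(MD)_x+I)L(y,x)$ together with the resolvent identity $[MD,(1-K_M')^{-1}]=(1-K_M')^{-1}[MD,K_M'](1-K_M')^{-1}$.

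The commutator kernel computation runs parallel to the MDQ case. Using the integral representation (\ref{EquationKernelAsIntegral}), one obtains $(y\partial_y+x\partial_x)K_M(y,x)=f(y)g(x)-K_M(y,x)=(-1)^{M+1}\phi_0(y)\psi_M(x)-K_M(y,x)$, while $y\partial_y\chi_J(y)=\sum_k(-1)^{k-1}a_k\delta(y-a_k)$ supplies the boundary contribution; after the cancellation $-K_M+K_M=0$ one arrives at
\[
[MD,K_M'](y,x)=(-1)^{M+1}\phi_0(y)\chi_J(y)\psi_M(x)+\sum_{k=1}^{2m}(-1)^{k-1}a_kK_M(a_k,x)\delta(y-a_k).
\]
The delta portion, after conjugation by $(1-K_M')^{-1}$ and application to $\psi_j$, yields the boundary sum $\sum_k(-1)^{k-1}a_kR_M'(y,a_k)\mathcal{P}_j(a_k)$. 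The rank-one portion, through the identity analogous to (\ref{Pzvezda}) relating the conjugation of $\phi_0\chi_J$ on one side and $\psi_M$ on the other to objects supported on $J$, produces the $(-1)^{M+1}V_{0,j}\mathcal{P}_M(y)$ term after identifying the resulting pairings with integrals over $J$.

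For the inhomogeneous term $(1-K_M')^{-1}MD\psi_j$, I would invoke the recurrence obtained by differentiating the series in (\ref{PHIPSI}): for $1\leq j\leq M$,
\[
MD\psi_j=\psi_{j-1}+(-1)^{M-j}e_{M-j+1}(\nu_1,\ldots,\nu_M)\psi_M,
\]
after identifying $g$ with $\psi_M$ up to the sign convention fixed by (\ref{Coefficientsa_i}). For $j=0$, the differential equation (\ref{DiffEquationForg(y)}) together with $\nu_0=0$ collapses to $MD\psi_0=(-1)^Myg=(-1)^My\psi_M$. The resulting multiplication-by-$y$ is then transferred past $(1-K_M')^{-1}$ using $[(1-K_M')^{-1},M]=(1-K_M')^{-1}[K_M',M](1-K_M')^{-1}$, where formula (\ref{The Generalized Bessel Kernel}) gives $[K_M',M](y,x)=-\sum_i\phi_i(y)\chi_J(y)\psi_i(x)$; this produces the $(-1)^My\mathcal{P}_M(y)$ and $(-1)^M\sum_jV_{j,M}\mathcal{P}_j(y)$ terms appearing in (\ref{PartialP1}), and combines with the commutator contribution to give (\ref{PartialP1})--(\ref{PartialP2}) in full.

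Finally, (\ref{PartialP3}) follows directly from $\partial_{a_k}(1-K_M')^{-1}=(1-K_M')^{-1}(\partial_{a_k}K_M')(1-K_M')^{-1}$ together with $\partial_{a_k}K_M'(y,x)=(-1)^kK_M(a_k,x)\delta(y-a_k)$, where the delta now sits on the first variable of $K_M'$---the one carrying $\chi_J$---rather than on the second, as in the $K_M$-case. The main obstacle in executing this plan is careful sign bookkeeping---factors of $(-1)^M$, $(-1)^{M-j}$, $(-1)^{k-1}$ and the parity hidden in $\alpha_{j-1}$ must combine consistently---together with the correct use of the duality identity in the transpose setting for $K_M'$ (rather than the MDQ-side identity (\ref{Pzvezda}) involving $K_M^T$), in order that the rank-one contributions, once conjugated, be expressible through the scalars $V_{i,j}$ defined as integrations of $\mathcal{P}_j$ against $\phi_i$ over $J$.
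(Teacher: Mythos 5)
Your plan is essentially the paper's own proof: the same splitting $MD\mathcal{P}_j=[MD,(1-K_M')^{-1}]\psi_j+(1-K_M')^{-1}MD\psi_j$, the same commutator computation producing the $(-1)^{M+1}V_{0,j}\mathcal{P}_M$ and boundary terms of (\ref{PartialP2}), the same recurrence $MD\psi_j=\psi_{j-1}-\alpha_{j-1}\psi_M$ for $1\leq j\leq M$ together with the ODE (\ref{DiffEquationForg(y)}) and $\nu_0=0$ for $j=0$, and the same resolvent identity for (\ref{PartialP3}). So in structure the argument is correct and matches the paper step for step.

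One concrete slip needs repair: your stated formula $[K_M',M](y,x)=-\sum_i\phi_i(y)\chi_J(y)\psi_i(x)$ has the wrong sign. Since the kernel of $K_M'$ is $K_M(y,x)\chi_J(y)$ with $\chi_J$ on the integration variable, the commutator kernel is $(y-x)K_M(y,x)\chi_J(y)=+\sum_j\psi_j(x)\phi_j(y)\chi_J(y)$: the transposition turns the factor $(x-y)$ into $(y-x)$ and cancels the minus sign present in the $\mathcal{Q}$-side formula $[K_M,M](x,y)=-\sum_j\phi_j(x)\psi_j(y)\chi_J(y)$, which you appear to have carried over unchanged. Followed literally, your sign yields $(-1)^{M+1}\sum_jV_{j,M}\mathcal{P}_j(y)$ rather than the term $(-1)^{M}\sum_jV_{j,M}\mathcal{P}_j(y)$ of (\ref{PartialP1}) that you (correctly) claim to obtain, so the intermediate formula and the asserted conclusion are currently inconsistent. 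Two smaller remarks: your formulas for $[MD,K_M']$ and $\partial_{a_k}K_M'$ are correct only under the convention that the first listed argument is the one carrying $\chi_J$ (the integration variable); state this explicitly, since with the opposite convention they become the transposed kernels and the subsequent conjugation argument would not go through as written. Finally, no analogue of (\ref{Pzvezda}) is actually needed on the $\mathcal{P}$-side: in $(1-K_M')^{-1}[MD,K_M'](1-K_M')^{-1}\psi_j$ the cutoff $\chi_J$ accompanies $\phi_0$ on the integration side, so the pairing is literally $\int_J\phi_0\,\mathcal{P}_j=V_{0,j}$ by the definition of $V_{i,j}$, and likewise for $\int_J\phi_j\,\mathcal{P}_M=V_{j,M}$ in the $j=0$ case.
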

\begin{proof}
We have
\begin{equation}\label{MDP1}
MD\mathcal{P}_j(x)=[MD,(1-K_M')^{-1}]\psi_j(x)+(1-K_M')^{-1}MD\psi_j(x),\;\ 0\leq j\leq M.
\end{equation}
Calculations similar to that leading to formula (\ref{[MD,(I-K)]phi}) give us
\begin{equation}\label{MDP2}
\begin{split}
[MD,(1-K_M')^{-1}]\psi_j(x)&=(-1)^{M+1}\mathcal{P}_M(x)V_{0,j}(a_1,\ldots,a_{2m})\\
&+\sum\limits_{k=1}^{2m}(-1)^{k-1}a_kR_M'(x,a_k)\mathcal{P}_j(a_k),\;\; 0\leq j\leq M.
\end{split}
\end{equation}
Note that equations (\ref{DiffEquationForg(y)}) and (\ref{PHIPSI}) imply
\begin{equation}\label{MDP3}
MD\psi_j(y)=\left\{
              \begin{array}{ll}
                \psi_{j-1}(y)-\alpha_{j-1}\psi_M(y), & 1\leq j\leq M, \\
                (-1)^My\psi_M(y), & j=0.
              \end{array}
            \right.
\end{equation}
It follows immediately  from equation (\ref{MDP3})  that
\begin{equation}\label{MDP4}
(1-K_M')^{-1}MD\psi_j(y)=\mathcal{P}_{j-1}(y)+(-1)^{M-j}e_{M-j+1}(\nu_1,\ldots,\nu_M)\mathcal{P_M}(y),
\end{equation}
where $1\leq j\leq M$. Equations (\ref{MDP2}) and (\ref{MDP4}) give equation (\ref{PartialP2})
in the statement of the Proposition. Moreover,
\begin{equation}\label{MDP5}
\begin{split}
&(1-K_M')^{-1}MD\psi_0(y)=(-1)^M(1-K_M')^{-1}M\psi_M(y)\\
&=(-1)^My\mathcal{P}_M(y)+(-1)^M\left[(1-K_M')^{-1},M\right]\psi_M(y).
\end{split}
\end{equation}
After straightforward calculations (similar to those in the proof of Proposition \ref{PropositionMDQ})
we obtain
$$
[K_M',M](x,y)=\sum\limits_{j=0}^M\psi_j(x)\phi_j(y)\chi_J(y),
$$
and
\begin{equation}\label{MDP7}
\begin{split}
&\left[(1-K_M')^{-1},M\right]\psi_M(y)=(1-K_M')^{-1}[K_M',M](1-K_M')^{-1}\psi_M(y)\\
&=\sum\limits_{j=0}^M\mathcal{P}_j(y)V_{j,M}(a_1,\ldots,a_{2m}).
\end{split}
\end{equation}
Equation (\ref{PartialP1}) follows from equations (\ref{MDP1}), (\ref{MDP2}), (\ref{MDP5}), and (\ref{MDP7}).
In order to see that equation (\ref{PartialP3}) holds true use the formula
$$
\frac{\partial}{\partial a_k}(1-K_M')^{-1}=(1-K_M')^{-1}\frac{\partial K_M'}{\partial a_k}(1-K_M)^{-1}.
$$
The kernel of $\frac{\partial}{\partial a_k}K_M'$ is
$$
(-1)^kK_M(a_k,x)\delta(y-a_k).
$$
Taking into account the first equation in (\ref{R'}) we obtain equation (\ref{PartialP3}).
\end{proof}
\begin{prop}\label{PropositionPV} We have
\begin{equation}\label{PartialV}
\frac{\partial}{\partial a_l}V_{i,j}(a_1,\ldots,a_{2m})=(-1)^l\mathcal{Q}_i(a_l;a_1,\ldots,a_{2m})\mathcal{P}_j(a_l;a_1,\ldots,a_{2m}),
\end{equation}
for $0\leq i,j\leq M$, and for $1\leq l\leq 2m$.
\end{prop}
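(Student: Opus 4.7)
The plan is to differentiate the defining formula
\[
V_{i,j}(a_1,\ldots,a_{2m})=\sum_{k=1}^{m}\int_{a_{2k-1}}^{a_{2k}}\phi_i(x)\,\mathcal{P}_j(x;a_1,\ldots,a_{2m})\,dx
\]
with respect to $a_l$ via the Leibniz rule. This splits the derivative into a boundary contribution from the dependence of the endpoints of the subintervals on $a_l$, and an interior contribution from the parametric dependence of $\mathcal{P}_j$ on $a_l$ inside the integrand.

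The boundary part is immediate: $a_l$ appears as an endpoint in exactly one subinterval, as the lower limit when $l$ is odd and as the upper limit when $l$ is even. Both cases combine into $(-1)^l\phi_i(a_l)\mathcal{P}_j(a_l)$. For the interior part I invoke equation (\ref{PartialP3}) of Proposition \ref{PropositionMDP}, which states that $\frac{\partial}{\partial a_l}\mathcal{P}_j(x;a_1,\ldots,a_{2m})=(-1)^l R_M'(x,a_l)\mathcal{P}_j(a_l;a_1,\ldots,a_{2m})$. Substituting this into the integral reduces the interior contribution to $(-1)^l\mathcal{P}_j(a_l)\int_J\phi_i(x)R_M'(x,a_l)\,dx$.

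The main step, and the one I expect to be the principal obstacle, is the identity
\[
\int_J\phi_i(x)R_M'(x,a_l)\,dx=\mathcal{Q}_i(a_l)-\phi_i(a_l).
\]
To prove it, I would expand $R_M'=\sum_{n\geq 1}(K_M')^n$ as a Neumann series and iterate the definition $K_M'(x,y)=K_M(y,x)\chi_J(y)$; for any $y\in J$ this yields
\[
(K_M')^n(x,y)=\int_{J^{n-1}}K_M(y,z_{n-1})K_M(z_{n-1},z_{n-2})\cdots K_M(z_1,x)\,dz_1\cdots dz_{n-1}=K_M^n(y,x),
\]
so that on $J$ the kernels of $R_M'$ and $R_M$ are related by transposition, $R_M'(x,a_l)=R_M(a_l,x)$. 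Here the value at the endpoint is interpreted as the limit from inside $J$, in accordance with Remark (a) following Proposition \ref{Proposition Partial Differential Equations for x y xi eta}. Because $R_M(a_l,x)$ carries a factor $\chi_J(x)$ inherited from $K_M$, the integration over $J$ coincides with integration over $(0,+\infty)$, and hence $\int_J\phi_i(x)R_M(a_l,x)\,dx=(R_M\phi_i)(a_l)=(\mathcal{Q}_i-\phi_i)(a_l)$ by the identity $R_M=(1-K_M)^{-1}-1$.

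Adding the boundary and interior contributions produces
\[
\frac{\partial}{\partial a_l}V_{i,j}=(-1)^l\phi_i(a_l)\mathcal{P}_j(a_l)+(-1)^l\mathcal{P}_j(a_l)\bigl(\mathcal{Q}_i(a_l)-\phi_i(a_l)\bigr)=(-1)^l\mathcal{Q}_i(a_l)\mathcal{P}_j(a_l),
\]
which is the claimed formula (\ref{PartialV}).
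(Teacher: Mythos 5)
Your proposal is correct and follows essentially the paper's own route: the same split of $\partial_{a_l}V_{i,j}$ into the endpoint contribution $(-1)^l\phi_i(a_l)\mathcal{P}_j(a_l)$ plus the interior term $(-1)^l\mathcal{P}_j(a_l)\int_J\phi_i(x)R_M'(x,a_l)\,dx$ via equation (\ref{PartialP3}), followed by reducing that integral to $\mathcal{Q}_i(a_l)-\phi_i(a_l)$ through the transposition relation between $K_M'$ and $K_M$. The only difference is cosmetic: you justify $\chi_J R_M'\chi_J=\left(\chi_J R_M\chi_J\right)^T$ by a Neumann-series expansion (which, strictly, holds for $x\in J$ and is made unconditional by the push-through identity $(1-AB)^{-1}A=A(1-BA)^{-1}$ already implicit in Proposition \ref{PropositionKernelsRmRmPrimeExplicitFormulae}), whereas the paper simply invokes $R_M'=-1+(1-K_M')^{-1}$ together with the kernel relation $K_M'(x,y)=K_M(y,x)\chi_J(y)$.
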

\begin{proof}
We note that
$$
\frac{\partial}{\partial a_l}\chi_J(x)=(-1)^l\delta(x-a_l).
$$
Taking into account this equation, together with equation (\ref{PartialP3}), we obtain
\begin{equation}\label{PV3}
\begin{split}
&\frac{\partial}{\partial a_l}V_{i,j}(a_1,\ldots,a_{2m})=(-1)^l\phi_i(a_l)\mathcal{P}_j(a_l;a_1,\ldots,a_{2m})\\
&+(-1)^l\left(\int\limits_J\phi_i(x)R_M'(x,a_l)dx\right)\mathcal{P}_j(a_l;a_1,\ldots,a_{2m}).
\end{split}
\end{equation}
Now, from equation (\ref{R'}) we see that the kernel of $R_M'$,
$R_M'(x,a_l)$, is
$$
\delta(x-a_l)+\left(1-K_M'\right)^{-1}(x,a_l).
$$
Using this fact, and the fact  that the kernel of $K_M'$ (at points
$x,y$) is $K_M(y,x)\chi_J(y)$  we obtain equation (\ref{PartialV})
from equation (\ref{PV3}).
\end{proof}
\subsection{Explicit formulae for the kernels $R_M(x,y)$ and $R_M'(x,y)$}
\begin{prop}\label{PropositionKernelsRmRmPrimeExplicitFormulae}
Let $R_M$ be the resolvent of $K_M$, and $R_M'$ be the resolvent of $K_M'$. Denote by $R_M(x,y)$ the kernel of $R_M$, and denote by $R_M'(x,y)$ the kernel of $R_M'$.
We have
$$
R_M(x,y)=\frac{\sum\limits_{j=0}^M\mathcal{Q}_j(x;a_1,\ldots,a_{2m})\mathcal{P}_j(y;a_1,\ldots,a_{2m})}{x-y}\chi_J(y),
$$
and
$$
R_M'(x,y)=-\frac{\sum\limits_{j=0}^M\mathcal{P}_j(x;a_1,\ldots,a_{2m})\mathcal{Q}_j(y;a_1,\ldots,a_{2m})}{x-y}\chi_J(y).
$$
\end{prop}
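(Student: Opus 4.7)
The plan is to exploit the integrable structure \eqref{The Generalized Bessel Kernel} of the kernel of $K_M$ via the commutator with the operator $\mathcal{M}$ of multiplication by the independent variable, $(\mathcal{M}\varphi)(x)=x\varphi(x)$. This is the same strategy already employed in the paper in the proofs of Propositions \ref{PropositionMDQ} and \ref{PropositionMDP}. The key observation is that $(x-y)\delta(x-y)=0$, so the kernel of $[\mathcal{M},\rho_M]$ equals $(x-y)R_M(x,y)$; it therefore suffices to compute this commutator and divide by $x-y$.

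I would first compute, using the integrable form of $K_M(x,y)$ together with the continuity property \eqref{ContinuityOfK_M},
\[
[\mathcal{M},K_M](x,y)=(x-y)K_M(x,y)\chi_J(y)=\sum_{j=0}^{M}\phi_j(x)\,\psi_j(y)\,\chi_J(y).
\]
Then, applying the standard identity $[\mathcal{M},(1-K_M)^{-1}]=(1-K_M)^{-1}[\mathcal{M},K_M](1-K_M)^{-1}$ and composing kernels, the left factor turns $\phi_j$ into $\mathcal{Q}_j(x)$, while the right factor yields $\bigl[(1-K_M^T)^{-1}(\psi_j\chi_J)\bigr](y)$.

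The key auxiliary identity needed is $(1-K_M^T)^{-1}(\psi_j\chi_J)=\mathcal{P}_j\,\chi_J$, which is essentially the content of \eqref{Pzvezda}. To verify it, let $h=(1-K_M^T)^{-1}(\psi_j\chi_J)$. Then $h-K_M^Th=\psi_j\chi_J$, and since the kernel $K_M(y,x)\chi_J(x)$ of $K_M^T$ carries a factor $\chi_J$ in its first variable, $K_M^Th$ is supported on $J$, hence so is $h$. Restricted to $J$, the equation for $h$ coincides with the defining equation of $\mathcal{P}_j=(1-K_M')^{-1}\psi_j$, and uniqueness forces $h|_J=\mathcal{P}_j$. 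Combining everything gives
\[
(x-y)R_M(x,y)=\sum_{j=0}^{M}\mathcal{Q}_j(x)\,\mathcal{P}_j(y)\,\chi_J(y),
\]
which is the claimed formula for $R_M(x,y)$.

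The formula for $R_M'(x,y)$ will be obtained by the same method applied to $K_M'$. The extra minus sign arises from
\[
[\mathcal{M},K_M'](x,y)=(x-y)K_M(y,x)\chi_J(y)=-\sum_{j=0}^{M}\psi_j(x)\,\phi_j(y)\,\chi_J(y),
\]
and the analogous support/uniqueness argument identifies $(1-(K_M')^T)^{-1}(\phi_j\chi_J)$ with $\mathcal{Q}_j\,\chi_J$. The main (mild) obstacle is bookkeeping the four distinct operators $K_M,\,K_M',\,K_M^T,\,(K_M')^T$, whose kernels differ only in the placement of the characteristic function; once the correct resolvent is identified in each slot and the support argument is carried through, the computation reduces to the standard commutator identity for integrable operators.
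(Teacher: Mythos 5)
Your proposal is correct, and its overall architecture coincides with the paper's: both reduce the claim to the fact that the resolvent of an integrable kernel is again integrable, with $\phi_j,\psi_j$ replaced by their images under the appropriate resolvents, plus the identification of $(1-K_M^T)^{-1}(\psi_j\chi_J)$ with $\mathcal{P}_j\chi_J$ (and of $(1-(K_M')^T)^{-1}(\phi_j\chi_J)$ with $\mathcal{Q}_j\chi_J$). The difference is in what is proved versus cited. The paper simply invokes the Its--Izergin--Korepin--Slavnov theory \cite{Its} (see also \cite{DeiftIntegrableOperators}) for formulas \eqref{R(x,y)}--\eqref{R'(x,y)}, and asserts the identifications \eqref{(1-K)psihi}, \eqref{(1-K)phihi} without argument; you instead re-derive the integrable-resolvent fact in this concrete setting via the commutator identity $[\mathcal{M},(1-K_M)^{-1}]=(1-K_M)^{-1}[\mathcal{M},K_M](1-K_M)^{-1}$ together with $[\mathcal{M},K_M](x,y)=(x-y)K_M(x,y)\chi_J(y)=\sum_j\phi_j(x)\psi_j(y)\chi_J(y)$ (which is exactly the mechanism the paper uses elsewhere, e.g.\ in \eqref{SimpleOperatorIdentity} and in the proof of Proposition \ref{PropositionMDQ}, but applied there to $MD$ rather than to $\mathcal{M}$ alone), and you supply the support-plus-uniqueness argument behind \eqref{Pzvezda}. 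What this buys is a self-contained proof that also makes the $\chi_J$ bookkeeping among $K_M$, $K_M'$, $K_M^T$, $(K_M')^T$ explicit; what the paper's route buys is brevity, at the cost of treating the IIKS statement and the identification step as black boxes. The only points worth stating carefully in a final write-up are the ones you implicitly use: invertibility of $1-K_M$ (assumed in the paper, and needed for your uniqueness step on $L^2(J)$), and the remark that the identity $(x-y)R_M(x,y)=\sum_j\mathcal{Q}_j(x)\mathcal{P}_j(y)\chi_J(y)$ determines $R_M$ off the diagonal, the diagonal values being fixed by continuity exactly as for $K_M$ via \eqref{ContinuityOfK_M}.
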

\begin{proof}
Recall that the operators $R_M$ and $R_M'$ are defined by equations (\ref{R}), (\ref{R'}) correspondingly, where
$K_M$ is the operator with the kernel
\begin{equation}\label{KM(x,y)}
K_M(x,y)\chi_J(y)=\frac{\sum\limits_{j=0}^M\phi_j(x)\psi_j(y)}{x-y}\chi_J(y),
\end{equation}
and where $K_M'$ is the operator with the kernel
\begin{equation}\label{KM'(x,y)}
K_M(y,x)\chi_J(y)=-\frac{\sum\limits_{j=0}^M\psi_j(x)\phi_j(y)}{x-y}\chi_J(y).
\end{equation}
The kernels just written above are integrable in the sense of Its,
 Isergin,  Korepin and Slavnov \cite{Its}. This implies that the corresponding resolvent kernel has the same integrable form.
More explicitly,
\begin{equation}\label{R(x,y)}
R_M(x,y)=\frac{\sum\limits_{j=0}^M\left((1-K_M)^{-1}\phi_j\right)(x)\left((1-K_M^T)^{-1}\psi_j\chi_J\right)(y)}{x-y},
\end{equation}
and
\begin{equation}\label{R'(x,y)}
R_M'(x,y)=-\frac{\sum\limits_{j=0}^M\left((1-K_M')^{-1}\psi_j\right)(x)\left((1-(K_M')^T)^{-1}\phi_j\chi_J\right)(y)}{x-y}.
\end{equation}
Using equations (\ref{KM(x,y)}) and (\ref{KM'(x,y)}) we find
\begin{equation}\label{(1-K)psihi}
\left((1-K_M^T)^{-1}\psi_j\chi_J\right)(y)=\mathcal{P}_j(y;a_1,\ldots,a_{2m})\chi_J(y),
\end{equation}
and
\begin{equation}\label{(1-K)phihi}
\left((1-(K_M')^T)^{-1}\phi_j\chi_J\right)(y)=\mathcal{Q}_j(y;a_1,\ldots,a_{2m})\chi_J(y).
\end{equation}
Now, equations (\ref{R(x,y)}), (\ref{FunctionQjDefinition}) and (\ref{(1-K)psihi}) give the formula for $R_M(x,y)$
in the statement of the Proposition. In a similar way, we obtain the formula for $R_M'(x,y)$ from equations (\ref{R'(x,y)}),
(\ref{FunctionPjDefinition}), and (\ref{(1-K)phihi}).
\end{proof}
\begin{prop}\label{PropositionOneForm}
Define the $1$-form $w(a_1,\ldots,a_{2m})$ by the formula
$$
w(a_1,\ldots,a_{2m})=\sum\limits_{j=1}^{2m}\frac{\partial}{\partial a_j}\left(\log\det\left(1-K_M\right)\right)da_j.
$$
We have
$$
w(a_1,\ldots,a_{2m})=\sum\limits_{j=1}^{2m}(-1)^{j-1}R_M(a_j,a_j)da_j.
$$
\end{prop}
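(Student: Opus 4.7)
The plan is to compute $\partial_{a_j}\log\det(1-K_M)$ directly via the Jacobi formula and then convert the resulting trace into a diagonal value of the resolvent kernel $R_M(x,y)$ furnished in Proposition \ref{PropositionKernelsRmRmPrimeExplicitFormulae}.

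First I would write
\begin{equation}
\frac{\partial}{\partial a_j}\log\det(1-K_M)=-\operatorname{Tr}\!\left[(1-K_M)^{-1}\frac{\partial K_M}{\partial a_j}\right].
\nonumber
\end{equation}
Since the kernel of $K_M$ is $K_M(x,y)\chi_J(y)$ with $J=\bigcup_{l}(a_{2l-1},a_{2l})$, differentiating the characteristic function gives $\partial_{a_j}\chi_J(y)=(-1)^j\delta(y-a_j)$ (because raising an odd endpoint shrinks $J$ while raising an even endpoint enlarges it), so
\begin{equation}
\frac{\partial K_M}{\partial a_j}(x,y)=(-1)^j\,K_M(x,a_j)\,\delta(y-a_j),
\nonumber
\end{equation}
which is the same identity already used in the proof of Proposition~\ref{PropositionMDQ} (see equation (\ref{Partiala(1-K)})).

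Next I would take the trace of the composition. The kernel factorises as a rank-one operator, so a direct calculation gives
\begin{equation}
\operatorname{Tr}\!\left[(1-K_M)^{-1}\frac{\partial K_M}{\partial a_j}\right]=(-1)^j\bigl[(1-K_M)^{-1}K_M(\,\cdot\,,a_j)\bigr](a_j).
\nonumber
\end{equation}
The function $z\mapsto K_M(z,a_j)$ plays the role of the one-sided boundary value of the kernel of $K_M$ at $y=a_j$, and applying $(1-K_M)^{-1}$ produces precisely the one-sided boundary value of $R_M(x,y)=[(1-K_M)^{-1}K_M](x,y)$ at $y=a_j$. Using the integrable form of the resolvent from Proposition~\ref{PropositionKernelsRmRmPrimeExplicitFormulae} together with the continuity property (\ref{ContinuityOfK_M}) (so L'Hospital's rule applies on the diagonal), the boundary evaluation at $x=a_j$ is well-defined and equals $R_M(a_j,a_j)$.

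Combining these steps yields $\partial_{a_j}\log\det(1-K_M)=(-1)^{j-1}R_M(a_j,a_j)$, which is exactly the asserted formula for $w(a_1,\ldots,a_{2m})$. The only mildly delicate point is the bookkeeping around $\chi_J$: the kernel $R_M(x,y)$ carries a factor $\chi_J(y)$, so one has to check that the correct one-sided limit $y\to a_j$ inside $J$ is being taken when we identify $[(1-K_M)^{-1}K_M(\,\cdot\,,a_j)](a_j)$ with $R_M(a_j,a_j)$; this is immediate from the explicit integrable representation of $R_M(x,y)$, and no other obstacles arise.
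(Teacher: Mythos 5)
Your proposal is correct and follows essentially the same route as the paper: the Jacobi trace formula for $\partial_{a_j}\log\det(1-K_M)$, the identity $\partial_{a_j}K_M(x,y)=(-1)^jK_M(x,a_j)\delta(y-a_j)$, and the identification of the resulting trace with $(-1)^jR_M(a_j,a_j)$ via the resolvent kernel. The extra remarks about the one-sided limit in $y$ and the continuity property are fine but not needed beyond what the paper's two-line argument already uses.
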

\begin{proof} Use the well-known formula
$$
\frac{\partial}{\partial a_j}\left(\log\det\left(1-K_M\right)\right)=-\Tr\left((1-K_M)^{-1}\frac{\partial}{\partial a_j}K_M\right),
$$
and observe that the kernel of $(1-K_M)^{-1}\frac{\partial}{\partial a_j}K_M$ is
$$
(-1)^jR_M(x,a_j)\delta(y-a_j).
$$
\end{proof}
\begin{prop}
The kernel of $R_M$, $R_M(x,y)$, satisfies the following partial differential equation
\begin{equation}\label{PDEForRM}
\begin{split}
&\left(x\frac{\partial}{\partial x}+y\frac{\partial}{\partial y}+\sum\limits_{k=1}^{2m}a_k\frac{\partial}{\partial a_k}+I\right)R_M(x,y)\\
&=(-1)^{M+1}\mathcal{Q}_0(x;a_1,\ldots,a_{2m})
\mathcal{P}_M(y;a_1,\ldots,a_{2m})\chi_J(y).
\end{split}
\end{equation}
\end{prop}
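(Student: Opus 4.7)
The plan is to combine the commutator identity used in the proof of Proposition \ref{PropositionMDQ} with the $a_k$-derivative formula from equation (\ref{Partiala(1-K)}). The key observation is the kernel-level identity \eqref{MDL}: for any operator $L$ with integral kernel $L(x,y)$,
$$
(x\partial_x+y\partial_y+1)\,L(x,y)=[MD,L](x,y).
$$
Applying this with $L=(1-K_M)^{-1}$ and substituting the explicit formula \eqref{CommutatorMD(1-K)} for $[MD,(1-K_M)^{-1}](x,y)$ gives
$$
(x\partial_x+y\partial_y+1)(1-K_M)^{-1}(x,y)=(-1)^{M+1}\mathcal{Q}_0(x)\mathcal{P}_M^\ast(y)-\sum_{k=1}^{2m}(-1)^k a_k R_M(x,a_k)\rho_M(a_k,y).
$$

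Next, from \eqref{Partiala(1-K)}, $\partial_{a_k}(1-K_M)^{-1}(x,y)=(-1)^k R_M(x,a_k)\rho_M(a_k,y)$, so
$$
\sum_{k=1}^{2m}a_k\,\partial_{a_k}(1-K_M)^{-1}(x,y)=\sum_{k=1}^{2m}(-1)^k a_k R_M(x,a_k)\rho_M(a_k,y).
$$
Adding the two displays, the boundary sums telescope to zero, leaving
$$
\Bigl(x\partial_x+y\partial_y+\sum_{k=1}^{2m}a_k\partial_{a_k}+1\Bigr)(1-K_M)^{-1}(x,y)=(-1)^{M+1}\mathcal{Q}_0(x)\mathcal{P}_M^\ast(y).
$$

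To pass from $(1-K_M)^{-1}$ to $R_M$, use $R_M=(1-K_M)^{-1}-I$. The identity kernel $\delta(x-y)$ is independent of the $a_k$ and satisfies $(x\partial_x+y\partial_y+1)\delta(x-y)=0$ (by the distributional scaling identity $x\delta'(x)=-\delta(x)$). Hence subtracting off the identity does not affect the left-hand side, and finally invoking $\mathcal{P}_M^\ast(y)=\mathcal{P}_M(y)\chi_J(y)$ from \eqref{Pzvezda} yields the desired equation \eqref{PDEForRM}.

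The main bookkeeping step is verifying the cancellation between the boundary term in $[MD,(1-K_M)^{-1}]$ and the sum $\sum_k a_k\partial_{a_k}(1-K_M)^{-1}$; both are expressed in terms of $R_M(x,a_k)\rho_M(a_k,y)$ with matching signs, so the cancellation is immediate. The only mild subtlety is the distributional identity for $\delta(x-y)$, which ensures that replacing $(1-K_M)^{-1}$ by $R_M$ is harmless; everything else is a direct application of the previously established formulas.
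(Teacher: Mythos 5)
Your proposal is correct and follows essentially the same route as the paper: the paper likewise combines the kernel identity (\ref{MDL}) with the explicit commutator formula (\ref{CommutatorMD(1-K)}), cancels the boundary sum against $\sum_k a_k\partial_{a_k}$ via (\ref{Partiala(1-K)}), and invokes (\ref{Pzvezda}). The only difference is cosmetic: the paper states these identities directly for $R_M$ (since $R_M$ and $(1-K_M)^{-1}$ differ by the identity operator, which commutes with $MD$ and is parameter-independent), whereas you make that passage explicit via the distributional identity for $\delta(x-y)$ — a correct and harmless elaboration.
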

\begin{proof}
Formulae (\ref{MDL}), (\ref{CommutatorMD(1-K)}), and (\ref{Pzvezda}) give us
\begin{equation}
\begin{split}
&\left(x\frac{\partial}{\partial x}+y\frac{\partial}{\partial y}+I\right)R_M(x,y)\\
&=(-1)^{M+1}\mathcal{Q}_0(x;a_1,\ldots,a_{2m})
\mathcal{P}_M(y;a_1,\ldots,a_{2m})\chi_J(y)-\sum\limits_{k=1}^{2m}(-1)^ka_kR_M(x,a_k)\rho_M(a_k,y).
\end{split}
\nonumber
\end{equation}
Moreover,
$$
\sum\limits_{k=1}^{2m}a_k\frac{\partial}{\partial a_k}R_M(x,y)=\sum\limits_{k=1}^{2m}(-1)^ka_kR_M(x,a_k)\rho_M(a_k,y),
$$
as it follows from equation (\ref{Partiala(1-K)}).
\end{proof}
\subsection{Proof of Proposition \ref{Proposition Partial Differential Equations for x y xi eta}}
In order to obtain  partial differential equations (\ref{DynamicalEquation(a)})-(\ref{DynamicalEquation(f)}) in Proposition \ref{Proposition Partial Differential Equations for x y xi eta} we note that
$$
x_j^{(2l)}=\sqrt{-1}\mathcal{Q}_j\left(a_{2l};a_1,\ldots,a_{2m}\right),
$$
$$
y_j^{(2l)}=\sqrt{-1}\mathcal{P}_j\left(a_{2l};a_1,\ldots,a_{2m}\right),
$$
$$
x_j^{(2l-1)}=\mathcal{Q}_j\left(a_{2l-1};a_1,\ldots,a_{2m}\right),
$$
$$
y_j^{(2l-1)}=\mathcal{P}_j\left(a_{2l-1};a_1,\ldots,a_{2m}\right),
$$
and that
$$
\xi_j=(-1)^MV_{0,j}(a_1,\ldots,a_{2m})+(-1)^{M+1-j}e_{M+1-j}\left(\nu_0,\ldots,\nu_M\right),
$$
$$
\eta_j=(-1)^MV_{j,M}\left(a_1,\ldots,a_{2m}\right).
$$
In these formulae $0\leq j\leq M$, and $1\leq l\leq m$. Recall that
$\nu_0=0$. Now use the results of Propositions \ref{PropositionMDQ},
\ref{PropositionMDP}, \ref{PropositionPV}, together with the
explicit formulae for the kernels $R_M(x,y)$ and $R_M'(x,y)$
obtained in Proposition
\ref{PropositionKernelsRmRmPrimeExplicitFormulae}. This will give
formulae in Proposition \ref{Proposition Partial Differential
Equations for x y xi eta}. For example, let us check equation
(\ref{DynamicalEquation(e)}). Taking into account the definition of
$\xi_j$ (see the beginning of Section \ref{SectionPartial
differential equations}), and Proposition \ref{PropositionPV} we
obtain
$$
(-1)^M\frac{\partial}{\partial
a_l}\xi_j=(-1)^l\left(1-K_M\right)^{-1}\phi_0(a_l)\left(1-K_M'\right)^{-1}\psi_j(a_l).
$$
Now, assume that $l$ is odd. Then the above equation can be
rewritten as
$$
(-1)^M\frac{\partial}{\partial a_l}\xi_j=(-1)x_0^{(l)}y_j^{(l)}.
$$
Assume that $l$ is even. Then
$$
(-1)^M\frac{\partial}{\partial
a_l}\xi_j=(\sqrt{-1})^2x_0^{(l)}y_j^{(l)},
$$
where $\sqrt{-1}$ comes from the definition of $x_0^{(l)}$ and
$y_j^{(l)}$ for an even $l$. In both cases we obtain equation
(\ref{DynamicalEquation(e)}).

\qed
\subsection{Proof of Proposition \ref{PropositionHamiltonianEquations}, Proposition \ref{PropositionInvolution} and of Proposition \ref{PropositionSchlesinger}}
To check equations (\ref{xdot1,ydot1}) and (\ref{zetadot1,etadot1}) we need an explicit formula for the Hamiltonians $H_l$.
From the very definition of $H_l$ (equation (\ref{HlDefinition})), and from Proposition \ref{PropositionOneForm} it follows that
$$
H_l=(-1)^la_lR_M(a_l,a_l),\;\; 1\leq l\leq 2m.
$$
The quantity $R_M(a_l,a_l)$ can be obtained using formulas derived in Propositions \ref{PropositionMDQ}, \ref{PropositionMDP},
and in  Proposition \ref{PropositionKernelsRmRmPrimeExplicitFormulae}. The result is given by formula (\ref{Hl}).

Using equation (\ref{Hl}) we can check that  equations (\ref{xdot1,ydot1}) and (\ref{zetadot1,etadot1}) are in fact equivalent to
the partial differential equations (\ref{DynamicalEquation(a)})-(\ref{DynamicalEquation(f)}) in Proposition \ref{Proposition Partial Differential Equations for x y xi eta}.
This proves Proposition \ref{PropositionHamiltonianEquations}.
Equation (\ref{Involution}) of Proposition \ref{PropositionInvolution} can be checked by direct computations using equations (\ref{Hl}) and (\ref{Poisson Brackets}).  In addition, it can be checked by simple
calculations that equations
(\ref{Schlesinger1})-(\ref{Schlesinger3}) follow from
(\ref{DynamicalEquation(a)})-(\ref{DynamicalEquation(f)}).
 The equivalence of the formula  for
$H_l$ in Proposition \ref{PropositionSchlesinger} and of equation
(\ref{Hl}) can be  verified  directly as well.
 Proposition \ref{PropositionSchlesinger} is proved.
 \qed
 \subsection{Proof of Proposition \ref{PropositionNonlinearOrdinaryDiffEquation}}
 Equations (\ref{NonlinearOrdinaryDiffEquation1}) -(\ref{NonlinearOrdinaryDiffEquation6})
 is just a specialization of the general
 partial differential equations (\ref{JMMS3})-(\ref{DynamicalEquation(f)})
 in Proposition \ref{Proposition Partial Differential Equations for x y xi eta}
 to the case of the single interval $(0,s)$.
 Moreover, equation (\ref{PDEForRM}) implies
$$
\left(sR_M(s)\right)'=(-1)^{M+1}x_0(s)y_M(s),\;\; R_M(s):=R_M(s,s),
$$
or
$$
R_M(t)=\frac{(-1)^{M+1}}{t}\int\limits_0^tx_0(\tau)y_M(\tau)d\tau.
$$
On the other hand, we have
$$
\frac{d}{ds}\log\det\left(1-K_M\right)=-R_M(s),
$$
so
$$
\det\left(1-K_M\right)=\exp\left((-1)^M\int\limits_0^s\frac{1}{t}\left(\int\limits_0^tx_0(\tau)y_M(\tau)d\tau\right)dt\right).
$$
The formula for $F_M(s)$ in the statement of  Proposition \ref{PropositionNonlinearOrdinaryDiffEquation} follows from the formula just written above by integration by parts.
 Proposition \ref{PropositionNonlinearOrdinaryDiffEquation} is proved.
 \qed
\section{Conclusions}\label{SectionConclusions}
We considered a determiantal point process which arises in the
asymptotic analysis of products of rectangular Ginibre random
matrices, and describes the limiting behavior  of singular values of
such products. Our interest was to apply methods developed in Tracy
and Widom \cite{TracyWidomIntroduction}-\cite{TracyWidomSystems} to
this determinantal point process, and to describe  the gap
probabilities in terms of solutions of nonlinear differential
equations. As a result we have obtained a system of nonlinear
partial differential equations (see Proposition \ref{Proposition
Partial Differential Equations for x y xi eta}), and have
constructed a Hamiltonian system (see Proposition
\ref{PropositionHamiltonianEquations}) where the end points of
intervals play a role of multi-time independent variables. In this
construction the relevant Hamiltonians are expressible in terms of
logarithmic derivatives of the gap probabilities. The constructed
Hamiltonian system leads to the Schlesinger type equations (see
Proposition \ref{PropositionSchlesinger}), and to the isomonodromy
deformation equation of the Jimbo, Miwa, M$\hat{\mbox{o}}$ri, Ueno
and Sato theory (see Proposition
\ref{PropositionIsomonodromyFormulation} and equation
(\ref{IsomonodromyDeformationEquation})).

One of the most important problem of Random Matrix Theory is to
understand the decay of gap probabilities in the case of a single
interval $(0,s)$, as $s\rightarrow\infty$. For the determinantal
point process under considerations, this paper gives a formula for
such gap probabilities in terms of a solution of a system of
nonlinear differential equations, see  Proposition
\ref{PropositionNonlinearOrdinaryDiffEquation}. We expect that the
formula for the gap probabilities  in Proposition
\ref{PropositionNonlinearOrdinaryDiffEquation} will be useful in a
subsequent asymptotic analysis.

Finally, let us mention that similar methods can be applied to determinantal point processes on the real line formed by singular values of products of
finite independent Ginibre matrices.

\end{document}